\DeclareMathOperator{\tr}{Tr}
\newtheorem{theorem}{Theorem}
\newtheorem{lemma}{\emph{Lemma}}
\newcommand{\bra}[1]{\langle #1|}
\newcommand{\ket}[1]{|#1 \rangle}
\newcommand{\braket}[2]{ \langle #1| #2 \rangle}
\newcommand{\ketbra}[2]{|#1 \rangle\! \langle#2|}
\newcommand{\norm}[1]{ \lVert #1  \rVert}
\begin{document}

\title{Real quantum operations and state transformations}
\author{Tulja Varun Kondra}
\email{t.kondra@cent.uw.edu.pl}
\affiliation{Centre for Quantum Optical Technologies, Centre of New Technologies,
University of Warsaw, Banacha 2c, 02-097 Warsaw, Poland}

\author{Chandan Datta}
\affiliation{Centre for Quantum Optical Technologies, Centre of New Technologies,
University of Warsaw, Banacha 2c, 02-097 Warsaw, Poland}
\affiliation{Institute for Theoretical Physics III, Heinrich Heine University D\"{u}sseldorf, Universit\"{a}tsstra{\ss}e 1, D-40225 D\"{u}sseldorf, Germany}

\author{Alexander Streltsov}

\affiliation{Centre for Quantum Optical Technologies, Centre of New Technologies,
University of Warsaw, Banacha 2c, 02-097 Warsaw, Poland}

\begin{abstract}
Resource theory of imaginarity provides a useful framework to understand the role of complex numbers, which are essential in the formulation of quantum mechanics, in a mathematically rigorous way. In the first part of this article, we study the properties of ``real'' (quantum) operations both in single-party and bipartite settings. As a consequence, we provide necessary and sufficient conditions for state transformations under real operations and show the 
existence of ``real entanglement'' monotones. In the second part of this article, we focus on the problem of single copy state transformation via real quantum operations. When starting from pure initial states, we completely solve this problem by finding an analytical expression for the optimal fidelity of transformation, for a given probability of transformation and vice versa. Moreover, for state transformations involving arbitrary initial states and pure final states, we provide a semidefinite program to compute the optimal achievable fidelity, for a given probability of transformation.
\end{abstract}
%\keywords{Suggested keywords}%Use showkeys class option if keyword
                              %display desired
\maketitle
\section{Introduction}
In the early 18th-century complex numbers were introduced, soon to be realised that not only it is an indispensable part of mathematics, but also very useful in natural sciences. In fact, in physics, it has been widely used in the study of waves. It is no surprise that complex numbers played a crucial role in quantum physics as well. According to the postulates of quantum physics (finite dimensional), we describe the state of a physical system as a positive semidefinite operator with unit trace, acting on a complex Hilbert space. Naturally, the question about the necessity of complex Hilbert space arises. To put it another way, instead of complex Hilbert spaces, can we restrict ourselves to use only real Hilbert spaces to explain all the phenomena in quantum physics. Very recently, this question has been answered negatively\cite{Wu_PRL,renou,PhysRevLett.128.040403,PhysRevLett.128.040402}. 

Quantum resource theories \cite{Chitambar_2019} on the other hand, provide a unified approach for studying various quantum phenomena and their applications in quantum information protocols. Although the study of entangled state transformations under local operations and classical communications \cite{Bennett_1993,Bennett_1996,Vedral_97,Nielsen_1999,Horodecki_2009} has started in the 90's, the term \emph{resource theory} was originally introduced in a paper on resource theory of asymmetry by Gour and Spekkens in 2008 \cite{Gour_2008}. Since then, many other resource theories have been developed, as example, coherence \cite{ RevModPhys.89.041003}, purity \cite{ Streltsov_2018}, thermodynamics \cite{Goold_2016, Lostaglio_2019}, and stabilizer quantum computation \cite{Veitch_2014}. Very recently, operational resource theory of imaginarity \cite{Wu_PRL, IM2} has been introduced, capturing the effort to create and manipulate complex quantum states. The study of resource theory of imaginarity not only has a fundamental importance but also provides an operational meaning\cite{IM2}. 

In any resource theory, one of the main goals is to study the resource transformation under free operations. To be precise, given two quantum states $\rho$ and $\sigma$, whether we can transform $\rho$ to $\sigma$ under free operations specified by the theory. This idea of transforming a quantum state into a more desired state has also been the central focus of quantum control theory \cite{JOHANSSON20131234,RAZA2021107541}. In recent literature state transformations in different scenarios have been studied, namely deterministic \cite{Nielsen_1999,WinterPhysRevLett.116.120404,PhysRevA.91.052120}, stochastic \cite{vidal_prob,vidal_purestates,coherence_pure,regula2021probabilistic}, and approximate \cite{Vidal_approx,RegulaPhysRevA.101.062315}.  
In the formulation of the resource theory of imaginarity, one identifies the real density matrices as free states and free operations as the completely positive trace preserving (CPTP) maps which can be represented by real Kraus operators \cite{Wu_PRL}. Note that, similar to the resource theory of coherence \cite{plennio_coh}, the resource theory of imaginarity is a basis dependent theory. When we say an operator is ``real'', we mean it has only real matrix elements in the chosen (fixed) basis. Once we identify the set of free states and free operations, we can look at the problem of deterministic state transformations i.e, which state transformations are possible via free operations. In \cite{Wu_PRL, IM2}, authors consider deterministic conversion of qubit states under real operations. Here, we extend these results by presenting necessary and sufficient conditions for arbitrary state transformations under real operations. In the case, when deterministic transformation between two states is not possible, one can look at the possibility of a probabilistic transformation via stochastic free operations. In our case, the stochastic free operations correspond to CP trace non-increasing maps with real Kraus operators. The central goal of the problem of probabilistic (or stochastic) transformations is to find out the optimal probability with which a given transformation is possible.  Previously, in \cite{Wu_PRL, IM2}, the optimal probability for the conversion between pure states under real operations has been derived. Here, we extend these results by finding the optimal probability of transforming a pure state into an arbitrary state. A more general question about state transformation allows for a small error in the final state, and the goal is to find the optimal probability of this ``approximate'' transformation. This is the most general form of single copy transformation called ``stochastic-approximate'' transformation \cite{Varun_stochastic}. Recently this scenario has been explored for general resource theories \cite{Varun_stochastic,Regula_distillation}. For the case when the initial state is pure, we provide an analytical expression for the optimal probability of transformation, for a given allowed error in the transformation. We also provide an analytical expression for the optimal achievable fidelity, given a probability of the transformation. Furthermore, we provide a semidefinite program to compute the optimal fidelity of transforming an arbitrary state into a pure state, via real operations, for a given probability of transformation. 

\section{Properties of real quantum operations}
\subsection{Real quantum operations in single-party setting}
First we formally introduce the resource theory of imaginarity \cite{Wu_PRL,IM2}. Given a reference basis $\{\ket{i}\}$, the free states in this theory are those states that have real density matrices. Specifically a state $\rho$ is called a free state when $\bra{i}\rho\ket{j}\in \mathbb{R}$ for all $i$ and $j$. The free operations are those which can be characterised by real Kraus operators. A operation $\Lambda$ characterised by Kraus operators $\{K_m\}$ is free if $\bra{i}K_m\ket{j}\in \mathbb{R}$ for all $i$, $j$ and $m$. In this section, we will explore state transformations under covariant operations. Here, covariant means, covariant with respect to transpose. A CP map ($\mathcal{E}$) is said to be ``$\rho$-covariant'' iff $(\mathcal{E}(\rho))^T=\mathcal{E}(\rho^T)$. Similarly, a CP map ($\Lambda$) is said to be ``covariant'' iff $(\Lambda(\rho))^T=\Lambda(\rho^T)$ for all $\rho$. But before presenting the results we introduce the following lemma which will be used later. Note that, in this letter by CP maps we mean completely positive trace non-increasing maps. We will now use these definitions to prove some interesting properties of real CP maps.

\begin{lemma} \label{kraus}
Let $\mathcal{E}_1$ and $\mathcal{E}_2$ be two CP maps with Kraus operators $\{P_j\}$ and $\{Q_j\}$ where $j \in \{1,\cdots,n\}$. Then a new CP map $\mathcal{E}$ can be defined as
\begin{equation}
 \mathcal{E}(\cdot) = \frac{1}{2}\mathcal{E}_1(\cdot) + \frac{1}{2}\mathcal{E}_2(\cdot),
\end{equation}
where $\mathcal{E}$ has Kraus operators given by $\left\{\frac{P_j + Q_j}{2}, \frac{i(P_j - Q_j)}{2}\right\}$.
\end{lemma}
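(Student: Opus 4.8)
The plan is to verify directly that the completely positive map defined by the proposed Kraus operators reproduces the average $\tfrac{1}{2}\mathcal{E}_1 + \tfrac{1}{2}\mathcal{E}_2$, and then to confirm that this map is itself a legitimate (trace non-increasing) CP map. Recall that a CP map with Kraus operators $\{K_m\}$ acts as $\mathcal{E}(\rho) = \sum_m K_m \rho K_m^\dagger$, so the whole statement reduces to an algebraic identity that I would check by expanding.

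First I would write out $\mathcal{E}(\rho) = \sum_j \bigl[\tfrac{P_j+Q_j}{2}\,\rho\,(\tfrac{P_j+Q_j}{2})^\dagger + \tfrac{i(P_j-Q_j)}{2}\,\rho\,(\tfrac{i(P_j-Q_j)}{2})^\dagger\bigr]$ and expand both terms. The crucial observation is that taking the adjoint of the second Kraus operator conjugates the scalar $i$ to $-i$, so its prefactor becomes $(-i)(i)=1$; hence the second term contributes $\tfrac{1}{4}(P_j-Q_j)\rho(P_j-Q_j)^\dagger$ while the first contributes $\tfrac{1}{4}(P_j+Q_j)\rho(P_j+Q_j)^\dagger$.

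Next I would add the two expanded expressions term by term. The diagonal pieces $P_j\rho P_j^\dagger$ and $Q_j\rho Q_j^\dagger$ add with coefficient $\tfrac{1}{2}$ each, whereas the cross terms $P_j\rho Q_j^\dagger$ and $Q_j\rho P_j^\dagger$ appear with opposite signs in the two brackets and cancel exactly. Summing over $j$ then yields $\tfrac{1}{2}\sum_j P_j\rho P_j^\dagger + \tfrac{1}{2}\sum_j Q_j\rho Q_j^\dagger = \tfrac{1}{2}\mathcal{E}_1(\rho) + \tfrac{1}{2}\mathcal{E}_2(\rho)$, as required. This sign flip that kills the cross terms is the entire content of the lemma; everything else is bookkeeping.

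Finally, to confirm that $\mathcal{E}$ is a valid CP map I would check the trace non-increasing condition $\sum_m K_m^\dagger K_m \le \id$. The same cancellation of cross terms occurs here, leaving $\tfrac{1}{2}\sum_j P_j^\dagger P_j + \tfrac{1}{2}\sum_j Q_j^\dagger Q_j$, which is bounded by $\id$ because each $\mathcal{E}_i$ is itself trace non-increasing. I do not anticipate a real obstacle: the only point demanding care is tracking the complex conjugation of the scalar $i$ under the adjoint, since forgetting it would spuriously retain the cross terms and break the identity.
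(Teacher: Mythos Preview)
Your proof is correct and follows essentially the same approach as the paper: a direct expansion of the two Kraus terms, noting that the adjoint of $i$ flips its sign so the cross terms cancel and only $\tfrac{1}{2}P_j\rho P_j^\dagger + \tfrac{1}{2}Q_j\rho Q_j^\dagger$ survives. Your additional verification of the trace non-increasing condition is not in the paper's proof but is a welcome sanity check.
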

\begin{proof}
\begin{eqnarray}
 \mathcal{E}(\cdot) &=& \sum_j \left(\frac{P_j + Q_j}{2}(\cdot) \frac{P^{\dagger}_j + Q^{\dagger}_j}{2} + \frac{i(P_j - Q_j)}{2}(\cdot) \frac{-i(P^{\dagger}_j - Q^{\dagger}_j)}{2}\right) \nonumber\\
 &=& \sum_j\left( \frac{1}{2}P_j (\cdot) P^{\dagger}_j +\frac{1}{2}Q_j (\cdot) Q^{\dagger}_j\right) \nonumber \\
 &=& \frac{1}{2}\mathcal{E}_1(\cdot) + \frac{1}{2}\mathcal{E}_2(\cdot), \nonumber
\end{eqnarray}
which completes the proof.
\end{proof}
Using the above lemma, we now prove that, all covariant CP maps can be represented by real Kraus operators. 

\begin{theorem}\label{covariant_real}
All covariant CP maps can be expressed with real Kraus operators.
\end{theorem}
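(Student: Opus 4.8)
The plan is to translate the covariance condition into a statement about Kraus operators and then invoke Lemma~\ref{kraus}. First I would fix a Kraus representation $\Lambda(\cdot)=\sum_m K_m(\cdot)K_m^\dagger$ and compute the transpose of the output. Using the identity $(K_m\rho K_m^\dagger)^T = \overline{K_m}\,\rho^T\,\overline{K_m}^\dagger$, where the bar denotes entrywise complex conjugation, one gets $(\Lambda(\rho))^T = \sum_m \overline{K_m}\,\rho^T\,\overline{K_m}^\dagger$. Comparing this with $\Lambda(\rho^T)=\sum_m K_m\,\rho^T\,K_m^\dagger$, the covariance requirement $(\Lambda(\rho))^T=\Lambda(\rho^T)$ becomes $\sum_m \overline{K_m}\,\rho^T\,\overline{K_m}^\dagger = \sum_m K_m\,\rho^T\,K_m^\dagger$ for every $\rho$.

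Next I would note that as $\rho$ ranges over all inputs so does $\rho^T$, so the previous identity says that the CP map with Kraus operators $\{\overline{K_m}\}$ coincides with the CP map with Kraus operators $\{K_m\}$, i.e.\ both equal $\Lambda$. This is precisely the hypothesis needed to apply Lemma~\ref{kraus} with $\mathcal{E}_1=\mathcal{E}_2=\Lambda$, $P_j=K_j$, and $Q_j=\overline{K_j}$. The lemma then furnishes a Kraus representation of $\tfrac12\Lambda+\tfrac12\Lambda=\Lambda$ with operators $\bigl\{\tfrac{K_j+\overline{K_j}}{2},\ \tfrac{i(K_j-\overline{K_j})}{2}\bigr\}$.

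Finally I would observe that each of these operators is real: $\tfrac{K_j+\overline{K_j}}{2}=\real(K_j)$ and $\tfrac{i(K_j-\overline{K_j})}{2}=-\imag(K_j)$, both of which have real matrix entries in the reference basis. Hence $\Lambda$ admits a real Kraus representation, which is exactly the claim.

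The step I expect to require the most care is the bookkeeping in the first paragraph, namely correctly tracking how transposition and Hermitian conjugation interact to produce the conjugated Kraus operators $\overline{K_m}$, together with the observation that covariance holding for \emph{all} $\rho$ lets us replace $\rho^T$ by an arbitrary input, so that the two superoperators genuinely agree on their whole domain rather than merely on symmetric inputs. Once that equality of maps is established, the remainder is an immediate application of Lemma~\ref{kraus} and a routine identification of real and imaginary parts.
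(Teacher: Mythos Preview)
Your proposal is correct and follows essentially the same approach as the paper: both show that covariance forces $\{K_j\}$ and $\{\overline{K_j}\}$ to be Kraus representations of the same map $\Lambda$, and then apply Lemma~\ref{kraus} to $\tfrac12\Lambda+\tfrac12\Lambda$ to obtain the real Kraus operators $\real(K_j)$ and $-\imag(K_j)$. Your presentation is slightly more explicit about why the two superoperators agree on all inputs (via the observation that $\rho\mapsto\rho^T$ is surjective), but the argument is the same.
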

\begin{proof}
Let a covariant map $\Lambda$ have a Kraus representation given by $\{L_j\}$, therefore the following equations hold for all $\rho$
\begin{eqnarray}
    &&\sum_j L_j \rho L^\dagger_j = \sigma, \label{1stcov}\\
    &&\sum_j L_j \rho^T L^\dagger_j = \sigma^T. \label{2ndcov}
\end{eqnarray}
Taking transpose of (\ref{1stcov}) and (\ref{2ndcov}), gives the following equations
\begin{eqnarray}
   && \sum_j L^*_j \rho^{T} L^T_j = \sigma^{T},\\ \label{1st}
    && \sum_j L^*_j \rho L^T_j = \sigma. \label{2nd}
\end{eqnarray}
Therefore, $\{L_{j}\}$ and $\{L^*_{j}\}$ are Kraus representations of the same covariant operation $\Lambda$. We can re-write the operation in the following way
\begin{equation}
\Lambda(\cdot) = \frac{1}{2}\Lambda(\cdot) + \frac{1}{2}\Lambda(\cdot)=\frac{1}{2}\sum_j L_j (\cdot) L_j^\dagger+\frac{1}{2}\sum_j L_j^* (\cdot) L_j^T.
\end{equation}
Using Lemma \ref{kraus}, we see that these Kraus operators associated with $\Lambda$ can be expressed as $\left\{\frac{L_j +L^*_j}{\sqrt{2}}, \frac{i(L_j - L^*_j)}{\sqrt{2}}\right\}$. This completes the proof as the Kraus operators are real.
\end{proof}
It is easy to see that any CP map with real Kraus operators is covariant. Using this fact, along with the above theorem, one can say that a CP map is real iff it is covariant. $\rho$-covariant maps correspond to a larger set than covariant maps. But when considering state transformations of $\rho$, we will see that, covariant maps are as powerful as $\rho$-covariant maps.
\begin{theorem}
Let a transformation ($\rho \rightarrow \sigma$) be possible via $\rho$-covariant operations, then the transformation is also possible via covariant operations. \label{connection}
\end{theorem}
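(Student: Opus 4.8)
The plan is to symmetrise the given $\rho$-covariant map so that it becomes covariant for \emph{all} inputs while still sending $\rho$ to $\sigma$. Suppose the transformation is realised by a $\rho$-covariant CP map $\mathcal{E}$ with Kraus operators $\{L_j\}$, so that $\mathcal{E}(\rho)=\sigma$. The $\rho$-covariance condition $(\mathcal{E}(\rho))^T=\mathcal{E}(\rho^T)$ then gives $\mathcal{E}(\rho^T)=\sigma^T$, and this is the only property of $\mathcal{E}$ I will need beyond its action on $\rho$ itself.

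Next I would introduce the companion map $\mathcal{E}'(\cdot):=(\mathcal{E}((\cdot)^T))^T$. A short computation using $(ABC)^T=C^TB^TA^T$ shows that
\begin{equation}
\mathcal{E}'(X)=\sum_j L_j^*\, X\, (L_j^*)^\dagger ,
\end{equation}
so that, despite being defined through transpositions (which are positive but not completely positive on their own), $\mathcal{E}'$ is genuinely CP: it is presented here in operator-sum form with Kraus operators $\{L_j^*\}$. Trace non-increase follows likewise, since $\sum_j (L_j^*)^\dagger L_j^*=\overline{\sum_j L_j^\dagger L_j}\le \id$.

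I would then form the average $\Lambda=\frac{1}{2}\mathcal{E}+\frac{1}{2}\mathcal{E}'$. By Lemma~\ref{kraus}, $\Lambda$ admits the Kraus operators $\left\{\frac{L_j+L_j^*}{\sqrt 2},\ \frac{i(L_j-L_j^*)}{\sqrt 2}\right\}$, which are manifestly real; hence, by the remark following Theorem~\ref{covariant_real} that any CP map with real Kraus operators is covariant, $\Lambda$ is a covariant CP map. Finally I would verify the action on $\rho$: since $\mathcal{E}'(\rho)=(\mathcal{E}(\rho^T))^T=(\sigma^T)^T=\sigma$, we obtain $\Lambda(\rho)=\frac12\sigma+\frac12\sigma=\sigma$, and the trace (hence the success probability) is unchanged because $\tr(\mathcal{E}'(\rho))=\tr(\sigma)$.

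The computation is routine, so the only real content is the conceptual observation driving the argument: covariance is a condition at every input, but to reproduce $\sigma=\mathcal{E}(\rho)$ one needs the $\rho$-covariance hypothesis only at the two points $\rho$ and $\rho^T$. Averaging $\mathcal{E}$ with its transpose-conjugate $\mathcal{E}'$ buys full covariance ``for free'' precisely because $\mathcal{E}'(\rho)$ collapses back to $\sigma$. The step I would watch most carefully is the claim that $\mathcal{E}'$ is CP and trace non-increasing --- the temptation is to worry that the transpositions spoil complete positivity, but the explicit Kraus form $\{L_j^*\}$ settles this immediately.
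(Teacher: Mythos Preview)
Your proof is correct and is essentially the paper's own argument: define the conjugate map with Kraus operators $\{L_j^*\}$, check that it also sends $\rho\mapsto\sigma$ (the paper obtains this by transposing the two $\rho$-covariance equations, you obtain it via $\mathcal{E}'(\rho)=(\mathcal{E}(\rho^T))^T$, which is the same computation), then average and invoke Lemma~\ref{kraus} to get real Kraus operators. Your explicit verification that $\mathcal{E}'$ is CP and trace non-increasing, and your remark that only the values of $\mathcal{E}$ at $\rho$ and $\rho^T$ are used, are welcome additions but do not change the route.
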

\begin{proof}
Let $\Lambda_1$ be a $\rho$-covariant operation which achieves the desired transformation, having Kraus operators $\{K_{j}\}$. Therefore, the following relations hold
\begin{eqnarray}
    &&\sum_i K_j \rho K^\dagger_j = \sigma,\label{1st}\\ 
   && \sum_i K_j \rho^T K^\dagger_j = \sigma^T. \label{2nd}
\end{eqnarray}
Applying Transpose operations on both sides of (\ref{1st}) and (\ref{2nd}) gives the following equations
\begin{eqnarray}
  &&  \sum_j K^*_j \rho^{T} K^T_j = \sigma^{T},\\ 
    &&\sum_j K^*_j \rho K^T_j = \sigma. 
\end{eqnarray}
Therefore, we arrive at a new $\rho$-covariant operation ($\Lambda_2$) which achieves the same desired transformation, with a Kraus representation given by $\{K^*_{j}\}$. Next, we design a covariant operation ($\Lambda_3$), such that
\begin{equation}
\Lambda_3(\cdot) = \frac{1}{2}\Lambda_1(\cdot) + \frac{1}{2}\Lambda_2(\cdot),
\end{equation}
and using Lemma \ref{kraus}, we know that this operation has Kraus operators given by $\left\{\frac{K_j +K^*_j}{\sqrt{2}}, \frac{i(K_j - K^*_j)}{\sqrt{2}}\right\}$ (note that these Kraus operators are real). This completes the proof.
\end{proof}

Next, we show that Theorem \ref{connection} can be used to derive necessary and sufficient conditions for state transformations under real quantum operations. In \cite{QuantumMajorization}, the authors derived necessary and sufficient conditions for the existence of a CPTP map ($\mathcal{E}$), such that
\begin{equation}\label{sets}
\mathcal{E}(\rho_i) = \sigma_i \,\, \forall i \in \{1,\cdots,n\}, \end{equation}
where $\{\rho_i\}$ and $\{\sigma_i\}$ are two sets composed of $n$ density matrices each. It's easy to see that, a transformation $\rho \rightarrow \sigma$ is possible via $\rho$-covariant operations iff there exists a CPTP map tranforming the set $\{\rho,\rho^T\}$ into $\{\sigma,\sigma^{T}\}$. Therefore, the transformation conditions for Eq. (\ref{sets}), along with Theorem \ref{connection}, allow us to give necessary and sufficient conditions for the state transformations under real quantum operations, by substituting $\{\rho, \rho^T\}$ and $\{\sigma, \sigma^T\}$ instead of $\{\rho_i\}$ and $\{\sigma_i\}$.  For the sake of completeness we provide these conditions below. We refer to \cite{QuantumMajorization} for a more detailed discussion.

Let $\{\rho, \sigma\}$ be density matrices acting on $\{\mathcal{H}_{\mathrm{B}}, \mathcal{H}_{\mathrm{C}}\}$ and $\left|\phi_{+}^{\mathrm{AC}}\right\rangle$ denotes the maximally entangled state on $\mathcal{H}_{\mathrm{A}} \otimes \mathcal{H}_{\mathrm{C}}$. Additionally we define $d_A = d_C$, where $d_A, d_B$ and $d_C$ are the dimensions of the Hilbert spaces $\mathcal{H}_{\mathrm{A}}, \mathcal{H}_{\mathrm{B}}$ and $\mathcal{H}_{\mathrm{C}}$ respectively.
\begin{equation}
    \Omega^{ABC} = \frac{1}{2}\omega_1 \otimes \rho \otimes \sigma + \frac{1}{2}\omega_2 \otimes \rho^T \otimes \sigma^T. \nonumber
\end{equation}
Then the following statements are equivalent:

1. There exists a real CPTP map $\mathcal{E}_r$, such that               $\mathcal{E}_r(\rho) = \sigma$.

2. For any two states $\omega_1$ and $\omega_2$, we have 
$$
2^{-H_{\min }(A \mid B)_{\Omega}} \geqslant d_{\mathrm{C}}\left\langle\phi_{+}^{\mathrm{AC}}\left|\Omega^{\mathrm{AC}}\right| \phi_{+}^{\mathrm{AC}}\right\rangle.
$$

3. For any  $\{\omega_1, \omega_2 \} \in \mathcal{B}\left(\mathcal{H}_{\mathrm{A}}\right)$ : 
$$
H_{\min }(A \mid B)_{\Omega} \leqslant H_{\min }(A \mid C)_{\Omega}.
$$

4. $\alpha = 1$, where
\begin{eqnarray}
\alpha \equiv \min \operatorname{Tr}[Z] \nonumber\\
\text { subject to } &\quad I^{\mathrm{A}} \otimes Z \geqslant  X_{1} \otimes \rho + X_{2} \otimes \rho^T \nonumber\\
  & \operatorname{Tr}\left(\sigma^{\mathrm{T}} X_1 + \sigma X_2 \right)=1 ; X_{1}, X_{2} \geqslant 0 .
\end{eqnarray}
Here, $H_{\min }(R \mid A)_{\Omega}$ is the quantum conditional min-entropy of the state $\Omega^{RA}$, defined as
\begin{equation}
H_{\min }(R \mid A)_{\Omega}:=-\log \inf _{X_{A} \geq 0}\left\{\operatorname{tr}\left[X_{A}\right]: \mathbb{I}_{R} \otimes X_{A} \geq \Omega^{R A}\right\}.
\end{equation}

The above statements provides necessary and sufficient conditions to check whether a transformation $\rho\rightarrow\sigma$ is possible under real operations. Note that the optimisation problem in point 4, is a semidefinite program. 

\subsection{Real quantum operations in bipartite setting}
In this section, we study the nature of real operations in spatially separated bipartite scenarios. We introduce a new class of monotones under local real operations and classical communication (LRCC) \cite{IM2,Wu_PRL} which are independent to local operation and classical communication (LOCC) monotones. Let's assume a bipartite state is shared between Alice and Bob, where Alice and Bob share a classical channel and can perform LRCC operations. Note that, for any two real CP maps $\Lambda_1$ and $\Lambda_2$ 
\begin{equation}\label{covariance}
   ( \Lambda_1\otimes\Lambda_2 (\rho^{AB}))^{T_{B}}=\Lambda_1\otimes\Lambda_2((\rho^{AB})^{T_B})\,\,\textrm{for all}\,\,\rho^{AB}.
\end{equation}
Here, $T_{B}$ is the partial transpose on Bob's system. This is easy to see, as $\Lambda_2$ has only real Kraus operators. This implies that for any LRCC map ($\Lambda$),
\begin{equation}\label{LRCC_COV}
    ( \Lambda (\rho^{AB}))^{T_{B}}=\Lambda((\rho^{AB})^{T_B})\,\,\textrm{for all}\,\,\rho^{AB}.
\end{equation}
Using data-processing inequality of trace distance and Eq. (\ref{LRCC_COV}), we get
\begin{align}\label{LRCC_monotones}
    \norm{\rho^{AB}-(\rho^{AB})^{T_B}}_1&\geq  \norm{\Lambda(\rho^{AB})-\Lambda((\rho^{AB})^{T_B}))}_1\nonumber\\
    &\geq \norm{\Lambda(\rho^{AB})-\Lambda(\rho^{AB})^{T_B})}_1
\end{align}
Note that, the above inequality holds for all bipartite states $\rho^{AB}$ and all LRCC operations $\Lambda$. This shows that,
\begin{equation}
    D(\rho^{AB})= \norm{\rho^{AB}-(\rho^{AB})^{T_B}}_1
\end{equation}
is a real entanglement monotone (as it does not increase under LRCC operations). Here, instead of trace distance, one can also choose  any arbitrary distinguishability measure obeying data processing inequality under CPTP maps. An example of such a distinguishability measure would be the infidelity ($1-F$) between $\rho^{AB}$ and $(\rho^{AB})^{T_B}$. In this case, the infidelity ($1-F$) between $\rho^{AB}$ and $(\rho^{AB})^{T_B}$ can act as a LRCC monotone. Note that in Eq. (\ref{LRCC_monotones}) one can replace $T_B$ with $T_A$ (partial transpose on Alice's system) and get a separate set of monotones.

As a special case, for any real bipartite state $\rho^{AB}$ with $\norm{\rho^{AB}-(\rho^{AB})^{T_B}}_1$>0, Eq. (\ref{LRCC_monotones}) implies that there cannot exist an LRCC operation transforming $\ket{00}\bra{00}^{AB}$ into $\rho^{AB}$. Note that here $\rho^{AB}$ can also be a separable state. For example, lets take a real state which is ``separable'': $\eta^{AB}=\frac{1}{4}(\openone+\sigma_y\otimes\sigma_y)$, where $\sigma_y$ is the Pauli-y matrix. It is easy to see that $||\eta^{AB}-(\eta^{AB})^{T_B}||_{1}>0$. This shows that, even though this state is separable, it is ``entangled'' in real quantum theory. Therefore, this shows that invariance under partial transpose is a necessary condition for separable states in ``real'' quantum theory, which was shown in \cite{Chiribella_2022,caves}. 

\section{Single-copy state transformations}
\subsection{Geometric measure of imaginarity}
The geometric imaginarity of a pure state $\ket{\psi}$ can be expressed as \cite{IM2}
\begin{equation}
    \mathscr{I}_g(\ket{\psi})=1-\max_{\ket{\phi}\in \mathscr{R}}|\braket{\phi}{\psi}|^2, \label{gi_pure}
\end{equation}
where the maximization is taken over all the real pure states. The definition can be extended to mixed states by considering the minimization of average imaginarity over all pure state decompositions as follows  
\begin{equation}
    \mathscr{I}_g(\rho)=\min_{\{p_j,\ket{\psi_j}\}}\sum_{j} p_j  \mathscr{I}_g(\ket{\psi_j}), \label{gi_mixed}
\end{equation}
such that $\rho=\sum_j p_j \ketbra{\psi_j}{\psi_j}$. Note that the above definition of geometric imaginarity is equivalent to the following distance-based measure \cite{Alex_Linking}
\begin{equation}
    \mathscr{I}_g(\rho)=1-\max_{\sigma\in \mathcal{R}}F(\rho,\sigma),
\end{equation}
where $\mathcal{R}$ represents set of all real states and $\sqrt{F(\rho,\sigma)} = \tr \sqrt{\sqrt{\rho} \sigma \sqrt{\rho}}$ is the root fidelity. In \cite{IM2}, analytical expression of geometric measure of imaginarity has been derived for pure states. In the following theorem, we extend the previous result to find an analytical expression for the geometric measure of imaginarity for arbitrary states.
\begin{theorem}\label{Thm:geom_meas}
For a quantum state $\rho$, the geometric measure of imaginarity is given by
\begin{equation}
    \mathscr{I}_g(\rho) = \frac{1 - \sqrt{F(\rho,\rho^T)}}{2}, \label{geoimexp}
\end{equation}
\end{theorem}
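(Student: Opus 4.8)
The plan is to work with the distance-based form of the measure, $\mathscr{I}_g(\rho) = 1 - \max_{\sigma\in\mathcal{R}}F(\rho,\sigma)$, and to establish the two matching inequalities separately; it then suffices to show $\max_{\sigma\in\mathcal{R}}F(\rho,\sigma) = \tfrac{1}{2}\bigl(1+\sqrt{F(\rho,\rho^T)}\bigr)$, from which \eqref{geoimexp} follows at once. Throughout I would use two elementary facts about the root fidelity: it is symmetric and invariant under transposing both arguments, $\sqrt{F(A,B)} = \sqrt{F(A^T,B^T)}$ (since $(\sqrt{A})^T=\sqrt{A^T}$, $(\sqrt{A}B\sqrt{A})^T=\sqrt{A^T}B^T\sqrt{A^T}$, and $\tr$ is transpose-invariant), and that the Bures angle $d_B(A,B):=\arccos\sqrt{F(A,B)}$ is a genuine metric, hence obeys the triangle inequality.

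For the lower bound on $\mathscr{I}_g$, fix any real state $\sigma$. Since $\sigma^T=\sigma$, the invariance above gives $F(\sigma,\rho^T)=F(\sigma^T,\rho)=F(\sigma,\rho)$, so $\sigma$ is Bures-equidistant from $\rho$ and from $\rho^T$. Writing $\beta:=d_B(\rho,\rho^T)$ and applying the triangle inequality along $\rho\to\sigma\to\rho^T$ gives $\beta\le 2\,d_B(\rho,\sigma)$, i.e. $\sqrt{F(\rho,\sigma)}=\cos d_B(\rho,\sigma)\le\cos(\beta/2)$. Squaring and using $\cos^2(\beta/2)=\tfrac12(1+\cos\beta)=\tfrac12\bigl(1+\sqrt{F(\rho,\rho^T)}\bigr)$ yields $F(\rho,\sigma)\le\tfrac12\bigl(1+\sqrt{F(\rho,\rho^T)}\bigr)$ for every real $\sigma$, and therefore $\mathscr{I}_g(\rho)\ge\tfrac12\bigl(1-\sqrt{F(\rho,\rho^T)}\bigr)$.

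For achievability I would construct an explicit real $\sigma^\ast$ saturating this bound. Let $\ket{P}$ and $\ket{Q}$ be Uhlmann-optimal purifications of $\rho$ and $\rho^T$ on a doubled space, normalized so that $\braket{P}{Q}=\sqrt{F(\rho,\rho^T)}=:f\ge 0$, and set $\ket{\chi}=(\ket{P}+\ket{Q})/\sqrt{2+2f}$ with $\sigma^\ast=\tr_{2}\ketbra{\chi}{\chi}$. Since $\ket{P}$ purifies $\rho$ and $\ket{\chi}$ purifies $\sigma^\ast$, Uhlmann's theorem gives $\sqrt{F(\rho,\sigma^\ast)}\ge|\braket{P}{\chi}|=(1+f)/\sqrt{2+2f}=\cos(\beta/2)$, so $F(\rho,\sigma^\ast)\ge\tfrac12\bigl(1+\sqrt{F(\rho,\rho^T)}\bigr)$, which together with the upper bound forces equality. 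The remaining point to secure is that $\sigma^\ast$ is real, and I would enforce this by taking the two purifications to be interchanged by the conjugation antiunitary $\tilde K=K\otimes K'$ (complex conjugation in the reference basis), for which $K\rho K=\rho^T$ and $\tilde K\ket{P}=\ket{Q}$; then $\tilde K\ket{\chi}=\ket{\chi}$, so $\ket{\chi}$ is a real vector and $\sigma^\ast=\tr_{2}\ketbra{\chi}{\chi}$ is a real state.

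The main obstacle is precisely this compatibility: I must verify that an Uhlmann-optimal pair can be taken in the symmetric form $\ket{Q}=\tilde K\ket{P}$. Parametrizing purifications as $\ket{P}=(\sqrt{\rho}\otimes W)\ket{\Omega}$, the symmetric overlap becomes $\braket{P}{\tilde K P}=\tr\bigl(\sqrt{\rho}\sqrt{\rho^T}\,R\bigr)$ with $R=W^{\dagger}\overline{W}$ ranging over all symmetric unitaries. The structural fact that makes this close is that $M:=\sqrt{\rho}\sqrt{\rho^T}$ is complex-symmetric, $M^T=M$, so by the polar/Autonne--Takagi decomposition its polar unitary is symmetric; hence the unconstrained maximizer $R$, which attains $|\tr(MR)|=\tr|M|=\sqrt{F(\rho,\rho^T)}$, is already symmetric and thus admissible, delivering a genuinely real $\sigma^\ast$ that achieves the bound. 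An alternative route to the same step is to note that the transpose is a Bures isometry interchanging $\rho$ and $\rho^T$, so it fixes the midpoint of the minimal Bures geodesic, which is therefore transpose-invariant and real; I would nonetheless keep the purification construction as the primary argument, since it produces the optimal $\sigma^\ast$ explicitly.
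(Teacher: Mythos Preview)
Your proof is correct and takes a genuinely different route from the paper's. The paper works with the convex-roof definition and argues as follows: the bound $\mathscr{I}_g(\rho)\ge\tfrac12(1-\sqrt{F(\rho,\rho^T)})$ comes from joint concavity of the root fidelity applied to an optimal ensemble; saturation is obtained by building a ``conjugate-orthogonal'' decomposition $\{\lambda_i,\ket{\mu_i}\}$ of $\rho$, found by applying Autonne--Takagi to the symmetric matrix $A_{ij}=\sqrt{p_ip_j}\braket{\psi_i}{\psi_j^*}$ (in the eigenbasis of $\rho$), for which $\sum_i\lambda_i|\braket{\mu_i}{\mu_i^*}|$ equals $\tr\sqrt{\sqrt{\rho}\rho^T\sqrt{\rho}}$. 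You instead use the distance-based form and argue geometrically: the bound follows from the Bures-angle triangle inequality together with the observation that every real $\sigma$ is equidistant from $\rho$ and $\rho^T$; saturation comes from the Uhlmann midpoint $\ket{\chi}\propto\ket{P}+\ket{Q}$, made real by choosing the optimal purifications as a conjugate pair, which you justify via Autonne--Takagi applied to the complex-symmetric matrix $\sqrt{\rho}\sqrt{\rho^T}$. Both proofs thus rest on the same structural fact (symmetric polar decomposition), but invoke it on different objects. The paper's approach has the practical dividend that its optimal decomposition is reused in the proof of Lemma~\ref{equal} (every $\rho$ admits a decomposition into pure states of equal geometric imaginarity), which in turn underlies Theorem~\ref{thm:Geom_Continuity}; your approach, on the other hand, produces the closest real state $\sigma^\ast$ explicitly and gives a particularly clean derivation of the inequality direction.
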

The proof is given in the Appendix. In the following, we show that geometric measure of imaginarity plays a key role in probabilistic and stochastic-approximate state transformations.

\subsection{Probabilistic transformations}

As described in the introduction, sometimes exact deterministic state transformations are not possible. Then, we consider the probabilistic scenario. Previously, probabilistic transformations between pure states have been studied \cite{Wu_PRL,IM2}. Here in the following theorem, we extend this to a scenario where the target state is mixed.  
\begin{theorem}
The optimal probability with which $\ket{\psi}$ can be transformed into $\rho$ via real operations is given by 
\begin{equation} \label{opt_prob}
    P(\ket{\psi} \rightarrow \rho ) = \min[\frac{\mathscr{I}_g (\ket{\psi})}{\mathscr{I}_g (\rho)},1].
\end{equation}
\end{theorem}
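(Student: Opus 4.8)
The plan is to prove the formula by matching an upper bound (no real operation succeeds with higher probability) with an explicit achievability construction. The entire argument rests on the identification established above that a CP map is real if and only if it is covariant, together with Theorem~\ref{Thm:geom_meas}, which for a pure input reads $\mathscr{I}_g(\ket{\psi})=\tfrac{1}{2}(1-|\braket{\psi}{\psi^*}|)$, where $\ket{\psi^*}$ is the complex conjugate of $\ket{\psi}$ in the reference basis and $\sqrt{F(\ketbra{\psi}{\psi},\ketbra{\psi}{\psi}^T)}=|\braket{\psi}{\psi^*}|$.

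For the converse, suppose a real (hence covariant) trace-non-increasing map $\Lambda$ with real Kraus operators $\{K_m\}$ realizes $\Lambda(\ketbra{\psi}{\psi})=p\rho$. I would first complete it to a real CPTP map $\mathcal{E}(\cdot)=\Lambda(\cdot)\otimes\ketbra{0}{0}_F+\Lambda_0(\cdot)\otimes\ketbra{1}{1}_F$, with $\Lambda_0(\cdot)=M(\cdot)M$ and $M=\sqrt{\id-\sum_m K_m^\dagger K_m}$ real and symmetric, and the flag register real. Covariance of $\mathcal{E}$ together with $\ketbra{\psi}{\psi}^T=\ketbra{\psi^*}{\psi^*}$ gives $\mathcal{E}(\ketbra{\psi^*}{\psi^*})=\mathcal{E}(\ketbra{\psi}{\psi})^T$, so the two outputs are block-diagonal in the flag with success blocks $p\rho$ and $p\rho^T$. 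Applying monotonicity of the root fidelity under $\mathcal{E}$, its additivity over the orthogonal flag blocks, the scaling $\sqrt{F}(p\rho,p\rho^T)=p\sqrt{F(\rho,\rho^T)}$, and the bound $\sqrt{F}\le 1-p$ on the failure blocks (each of trace $1-p$), I obtain $|\braket{\psi}{\psi^*}|\le p\sqrt{F(\rho,\rho^T)}+(1-p)$. Rearranging and substituting Theorem~\ref{Thm:geom_meas} yields $\mathscr{I}_g(\ket{\psi})\ge p\,\mathscr{I}_g(\rho)$, hence $p\le\min[\mathscr{I}_g(\ket{\psi})/\mathscr{I}_g(\rho),1]$.

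For achievability I would fix an optimal decomposition $\rho=\sum_j q_j\ketbra{\chi_j}{\chi_j}$ with $\sum_j q_j\mathscr{I}_g(\ket{\chi_j})=\mathscr{I}_g(\rho)$, and assemble a single real operation whose success branch outputs $p\rho=\sum_j pq_j\ketbra{\chi_j}{\chi_j}$. For each $j$ I invoke the known optimal pure-to-pure real protocol $\ket{\psi}\to\ket{\chi_j}$ (with Lemma~\ref{kraus} used to realize complex targets through real Kraus operators), rescaled so that its success block carries exactly weight $pq_j$. Collecting these rescaled Kraus operators into one map and appending a real failure operator, the trace-non-increasing condition reduces, via the operator inequality $\sum_j w_j M_j\le(\sum_j w_j)\id$ valid for $0\le M_j\le\id$, to the scalar bound $\sum_j w_j\le1$ with $w_j=pq_j/\min[\mathscr{I}_g(\ket{\psi})/\mathscr{I}_g(\ket{\chi_j}),1]$; when every component obeys $\mathscr{I}_g(\ket{\chi_j})\ge\mathscr{I}_g(\ket{\psi})$ this collapses to $p\sum_j q_j\mathscr{I}_g(\ket{\chi_j})/\mathscr{I}_g(\ket{\psi})=1$, so the operation is admissible and saturates the bound.

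The main obstacle is the complementary case, in which the optimal decomposition contains components with $\mathscr{I}_g(\ket{\chi_j})<\mathscr{I}_g(\ket{\psi})$: there the crude operator bound overcounts the budget ($\sum_j w_j>1$) even though the target probability should remain reachable, since producing a less imaginary component ought to cost less than the worst-case estimate. I would resolve this either by a tighter, adaptively weighted construction that allocates the budget by the actual imaginarity of each component rather than the loose $M_j\le\id$ estimate, or—more robustly—by appealing to the necessary-and-sufficient criterion for transformations under real operations derived above (the min-entropy/SDP conditions) to certify directly the sub-normalized transformation $\ketbra{\psi}{\psi}\to p\rho$. The degenerate cases $\mathscr{I}_g(\rho)=0$ (any target reachable with probability one) and $\mathscr{I}_g(\ket{\psi})=0$ (no imaginarity can be created, $p=0$) are checked separately and agree with the formula.
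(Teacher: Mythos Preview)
Your converse is correct and, unlike the paper (which simply cites the upper bound $P(\sigma\to\rho)\le\min\{\mathscr{I}_g(\sigma)/\mathscr{I}_g(\rho),1\}$ from earlier work), gives a self-contained derivation from Theorem~\ref{Thm:geom_meas} via covariance and data processing of the root fidelity.

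The achievability half, however, has a genuine gap exactly where you flag it. The paper avoids your per-component budget problem altogether with a purification trick: take the optimal convex-roof decomposition $\rho=\sum_j p'_j\ketbra{\psi'_j}{\psi'_j}$, form the pure state $\ket{\rho}=\sum_j\sqrt{p'_j}\,\ket{\psi'_j}\otimes\ket{j}^A$ with a real ancilla basis, check that $\mathscr{I}_g(\ket{\rho})=\mathscr{I}_g(\rho)$, and then apply the known pure-to-pure result \emph{once} to reach $\ket{\rho}$ with probability $\min\{\mathscr{I}_g(\ket{\psi})/\mathscr{I}_g(\rho),1\}$; tracing out the ancilla yields $\rho$. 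There is no branch-by-branch assembly and hence no ``components above versus below $\mathscr{I}_g(\ket{\psi})$'' obstruction.

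If you prefer to keep your direct construction, the missing ingredient is Lemma~\ref{equal}: every $\rho$ admits a decomposition in which each component satisfies $\mathscr{I}_g(\ket{\chi_j})=\mathscr{I}_g(\rho)$. Using that decomposition, your case split disappears: all $w_j$ take the same form, and $\sum_j w_j$ equals $p\,\mathscr{I}_g(\rho)/\mathscr{I}_g(\ket{\psi})$ (or simply $p$ when $\mathscr{I}_g(\rho)\le\mathscr{I}_g(\ket{\psi})$), which is exactly $1$ at the optimal probability. Your proposed fallback to the SDP/min-entropy criterion is unnecessary.
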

\begin{proof}
From \cite{IM2}, we know that the ratio of geometric measures gives the upper bound for the optimal achievable probability of transforming $\sigma$ to $\rho$
\begin{equation}
     P(\sigma \rightarrow \rho ) \leq \min[\frac{\mathscr{I}_g (\sigma)}{\mathscr{I}_g (\rho)},1].\label{probstrat}
\end{equation}
In \cite{IM2}, it has been shown that this inequality is saturated for probabilistic pure to pure state transformations i.e, the optimal probability is given by the ratio of geometric measures 
\begin{equation}
     P(\ket{\psi} \rightarrow \ket{\phi} ) = \min[\frac{\mathscr{I}_g (\ket{\psi})}{\mathscr{I}_g (\ket{\phi})} ,1].\label{probpure}
\end{equation}
In the following, we prove that Eq. (\ref{probstrat}) is saturated whenever $\sigma$ is a pure state. Let's assume $\sigma$ = $\ket{\psi}\bra{\psi}$.
Recalling the definition of geometric imaginarity, we assume $\{p'_j,\ket{\psi'_j}\}$ as the ensemble which achieves the minimisation in Eq. (\ref{gi_mixed}) for the state $\rho=\sum_j p_j'\ketbra{\psi_j'}{\psi_j'}$. Furthermore, we choose the following purification of $\rho$,
\begin{equation}
    \ket{\rho} = \sum_j \sqrt{p_j'}\ket{\psi_j'} \otimes \ket{j}^{A},
\end{equation}
where $A$ denotes an ancillary system used for purification. It is easy to check that 
\begin{equation}
\mathscr{I}_g (\ket{\rho}) =  \mathscr{I}_g(\rho).
\end{equation}
From Eq. (\ref{probpure}), we find
\begin{equation}
     P(\ket{\psi} \rightarrow \ket{\rho} ) = \min[\frac{\mathscr{I}_g (\ket{\psi})}{\mathscr{I}_g (\ket{\rho})},1]=\min[\frac{\mathscr{I}_g (\ket{\psi})}{\mathscr{I}_g (\rho)},1].\label{prob_pure_mix}
\end{equation}
Here, we probabilistically converted $\ket{\psi}$ to $\ket{\rho}$, which is a bipartite pure state of system (of $\rho$) and ancilla $A$, with probability given in Eq. (\ref{prob_pure_mix}). We will then discard the ancilla to achieve the desired transformation. This completes the proof.
\end{proof}
The above result gives an optimal probability of converting a pure state into a mixed state and it can be easily computed as geometric measure is easy to find. 

\subsection{Stochastic-approximate transformations}
In the above, we consider exact transformations. Here, we consider a more general scenario where we consider both the stochastic and approximate scenarios together. Stochastic-approximate state transformations have been studied previously for general resource theories \cite{Varun_stochastic,regula2021probabilistic,Fang_2019,Regula_distillation}. The probability for stochastic-approximate conversion is defined as the maximum probability of converting a state $\rho$ into $\sigma$ with fidelity at least $f$ \cite{Varun_stochastic}:
\begin{equation}
    P_{f}(\rho\rightarrow\sigma)=\max_{\Lambda}\left\{\tr \Lambda(\rho) : F\left(\frac{\Lambda(\rho)}{\tr[ \Lambda(\rho)]},\sigma\right)\geq f\right\},
\end{equation}
where $\Lambda$ corresponds to the set of all real operations. Similarly, the fidelity for stochastic-approximate conversion provides the maximal fidelity of transforming a state $\rho$ into $\sigma$ with a success probability at least $p$ and can be expressed as \cite{Varun_stochastic}
\begin{equation}
     F_{p}(\rho\rightarrow\sigma)=\max_{\Lambda}\left\{F\left(\frac{\Lambda(\rho)}{\tr[ \Lambda(\rho)]},\sigma\right) : \tr \Lambda(\rho)\geq p \right\}.
\end{equation}
Before, going into the result we introduce some important results, which will be used to find the above quantities.   
\begin{theorem}
\label{thm:Geom_Continuity}
For an arbitrary state $\rho$, consider a set of states $S_{\rho,f}$ such that $F(\rho,\rho')\geq f$ for all $\rho' \in S_{\rho,f}$. The minimal geometric measure of imaginarity in $S_{\rho,f}$ is given by 
\begin{align}
\min_{\rho'\in S_{\rho,f}}\mathscr{I}_g(\rho') & =\sin^{2}\left(\max\left\{ \sin^{-1}\!\sqrt{\mathscr{I}_g(\rho)}-\cos^{-1}\!\sqrt{f},0\right\} \right).
\end{align}
For any pure state $\ket{\psi}$, the maximal geometric measure of imaginarity in $S_{\psi,f}$ is given by 
\begin{align}
    \max_{\rho'\in S_{\psi,f}}\mathscr{I}_g(\rho') & =\sin^{2}\left(\min\left\{ \sin^{-1}\!\sqrt{\mathscr{I}_g(\ket{\psi})}+\cos^{-1}\!\sqrt{f},\frac{\pi}{4}\right\} \right).
\end{align}
\end{theorem}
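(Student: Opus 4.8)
The plan is to recast both optimizations as geometric problems in the Bures (Uhlmann) angle. Writing $A(\rho,\sigma)=\cos^{-1}\sqrt{F(\rho,\sigma)}$ for the Bures angle, the distance-based form of the measure gives $\mathscr{I}_g(\rho)=1-\max_{\sigma\in\mathcal{R}}F(\rho,\sigma)=\sin^{2}A(\rho,\mathcal{R})$, where $A(\rho,\mathcal{R})=\min_{\sigma\in\mathcal{R}}A(\rho,\sigma)$ is the angle from $\rho$ to the compact set $\mathcal{R}$ of real states (the minimizer $\sigma^{*}$ exists by compactness). Hence $\sin^{-1}\sqrt{\mathscr{I}_g(\rho)}=A(\rho,\mathcal{R})\in[0,\pi/4]$, the bound $\pi/4$ reflecting $\mathscr{I}_g\leq 1/2$. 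Moreover, the constraint $F(\rho,\rho')\geq f$ defining $S_{\rho,f}$ reads $A(\rho,\rho')\leq r$ with $r:=\cos^{-1}\sqrt{f}$, so $S_{\rho,f}$ is a Bures ball of radius $r$. Throughout I would use two standard facts: $A$ is a metric (hence obeys the triangle inequality), and $\sin^{2}$ is increasing on $[0,\pi/4]$.

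For the first (minimization) identity the lower bound is immediate: for any $\rho'\in S_{\rho,f}$ the triangle inequality gives $A(\rho',\mathcal{R})\geq A(\rho,\mathcal{R})-A(\rho,\rho')\geq\sin^{-1}\sqrt{\mathscr{I}_g(\rho)}-r$, together with $A(\rho',\mathcal{R})\geq 0$, and applying $\sin^{2}$ yields ``$\geq$'' in the claimed formula. For achievability I would move along a Bures geodesic from $\rho$ toward its nearest real state $\sigma^{*}$. Concretely, pick mutually optimal purifications $\ket{\Psi_\rho},\ket{\Psi_{\sigma^{*}}}$ with $\braket{\Psi_\rho}{\Psi_{\sigma^{*}}}=\cos\theta_0$, where $\theta_0:=A(\rho,\mathcal{R})$, and set $\ket{\Psi(t)}=\tfrac{\sin(\theta_0-t)}{\sin\theta_0}\ket{\Psi_\rho}+\tfrac{\sin t}{\sin\theta_0}\ket{\Psi_{\sigma^{*}}}$. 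Its reduced state $\rho'(t)$ satisfies, by Uhlmann's theorem, $\sqrt{F(\rho,\rho'(t))}\geq|\braket{\Psi_\rho}{\Psi(t)}|=\cos t$ and $\sqrt{F(\rho'(t),\sigma^{*})}\geq|\braket{\Psi(t)}{\Psi_{\sigma^{*}}}|=\cos(\theta_0-t)$. Choosing $t=\min\{r,\theta_0\}$ places $\rho'(t)\in S_{\rho,f}$ and forces $A(\rho'(t),\mathcal{R})\leq\max\{\theta_0-r,0\}$, matching the lower bound and establishing equality.

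For the second (maximization) identity I would again split into a bound and a construction. The triangle inequality gives, for any $\rho'\in S_{\psi,f}$, $A(\rho',\mathcal{R})\leq A(\psi,\mathcal{R})+A(\psi,\rho')\leq\sin^{-1}\sqrt{\mathscr{I}_g(\ket{\psi})}+r$, while $A(\rho',\mathcal{R})\leq\pi/4$ always holds; together these give ``$\leq$'' after applying $\sin^{2}$. For achievability I would exploit that a pure state is, up to a real orthogonal rotation and a global phase (both of which leave $\mathscr{I}_g$ and all Bures angles to $\mathcal{R}$ invariant), of the canonical form $\ket{\psi}=\cos\mu\ket{0}+i\sin\mu\ket{1}$ with $\mu=\sin^{-1}\sqrt{\mathscr{I}_g(\ket{\psi})}\in[0,\pi/4]$. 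Taking the explicit trial state $\ket{\psi'}=\cos\nu\ket{0}+i\sin\nu\ket{1}$ with $\nu=\min\{\mu+r,\pi/4\}$, a direct computation gives $\braket{\psi}{\psi'}=\cos(\nu-\mu)$, so $A(\psi,\psi')=\nu-\mu\leq r$, i.e.\ $\ket{\psi'}\in S_{\psi,f}$, while $\mathscr{I}_g(\ket{\psi'})=\sin^{2}\nu$ equals the claimed maximum.

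The main obstacle is the achievability half in each case, the bounds being mere triangle inequalities. For the minimization it is the geodesic construction: one must verify that the reduced states of the interpolating purifications genuinely sit at Bures angle $t$ from $\rho$ and $\theta_0-t$ from $\sigma^{*}$ simultaneously, which is precisely where Uhlmann's theorem together with the choice of mutually optimal purifications does the work. For the maximization the delicate point is that extending the geodesic beyond $\ket{\psi}$ must increase the distance to the \emph{entire} set $\mathcal{R}$, not merely to the single nearest state $\sigma^{*}$; reducing to the two-dimensional canonical form is what makes this transparent and simultaneously explains why the answer saturates at $\pi/4$ (equivalently $\mathscr{I}_g=1/2$) once $\mu+r$ exceeds $\pi/4$.
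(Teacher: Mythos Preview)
Your proof is correct. The two triangle-inequality bounds and the achievability construction for the maximum (via the canonical form $\cos\mu\ket{0}+i\sin\mu\ket{1}$) coincide with the paper's argument essentially verbatim. Where you genuinely diverge is in the achievability of the \emph{minimum}: you interpolate along a Bures geodesic by taking mutually optimal Uhlmann purifications of $\rho$ and its closest real state $\sigma^{*}$ and tracing out the ancilla, whereas the paper invokes its Lemma~\ref{equal} to obtain a decomposition $\rho=\sum_i p_i\ketbra{\psi_i}{\psi_i}$ with every $\mathscr{I}_g(\ket{\psi_i})=\mathscr{I}_g(\rho)$, then rotates each $\ket{\psi_i}=\cos\alpha\ket{a_i}+i\sin\alpha\ket{a_i^{\perp}}$ to $\ket{\phi_i}$ with angle $\tilde\beta=\max\{\alpha-\cos^{-1}\sqrt{f},0\}$ and re-mixes with suitable weights. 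Your route is cleaner: it bypasses Lemma~\ref{equal} (and hence the underlying Theorem~\ref{Thm:geom_meas} on the closed form of $\mathscr{I}_g$) and uses nothing about imaginarity beyond the fact that $\mathscr{I}_g(\rho)=\sin^{2}A(\rho,\mathcal{R})$ for a compact set $\mathcal{R}$, so it would transfer verbatim to any distance-based monotone built from the Bures angle. The paper's construction, by contrast, is more explicit about the optimal $\rho_{\min}$ as a finite convex mixture of pure states, which can be useful if one wants to actually prepare the optimizer. One small caveat: your geodesic formula degenerates when $\theta_0=0$ (i.e.\ $\rho$ already real); you should note that this case is trivial since $\rho\in S_{\rho,f}\cap\mathcal{R}$ gives the minimum $0$ directly.
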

For a given target fidelity $f$ to achieve a state $\rho$, the above theorem gives the least possible geometric measure among all states $\rho'$ such that the fidelity between $\rho$ and $\rho'$ is at least $f$. Note that, the case of maximum geometric measure is discussed only for pure states and the case for mixed states remains open. In order to prove this theorem, we use the following lemma. We refer to the Appendix for the proof of the following lemma.

\begin{lemma} \label{equal}
For any state $\rho$, there exists a pure state decomposition $\{p_i,\ket{\psi_i}\}$, such that, $\mathscr{I}_g(\ket{\psi_i})=\mathscr{I}_g(\rho)$ for all i.
\end{lemma}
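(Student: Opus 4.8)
The plan is to exploit the closed form $\mathscr{I}_g(\rho)=\tfrac12\bigl(1-\sqrt{F(\rho,\rho^T)}\bigr)$ from Theorem~\ref{Thm:geom_meas}, which for a pure state reduces to $\mathscr{I}_g(\ket{\psi})=\tfrac12\bigl(1-|\braket{\psi}{\psi^*}|\bigr)$ because $F(\ketbra{\psi}{\psi},\ketbra{\psi^*}{\psi^*})=|\braket{\psi}{\psi^*}|^2$. Thus the claim is equivalent to producing a decomposition $\rho=\sum_i p_i\ketbra{\psi_i}{\psi_i}$ with $|\braket{\psi_i}{\psi_i^*}|=\sqrt{F(\rho,\rho^T)}$ for every $i$. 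First I would fix the eigendecomposition $\rho=\sum_k\lambda_k\ketbra{e_k}{e_k}$ and parametrize every decomposition through the isometric freedom of the subnormalized vectors $\ket{g_k}=\sqrt{\lambda_k}\ket{e_k}$, writing $\ket{\mu_i}=\sqrt{p_i}\ket{\psi_i}=\sum_k U_{ki}\ket{g_k}$ with $UU^\dagger=\id$. A direct computation gives $p_i=(U^\dagger\Lambda U)_{ii}$ and $\braket{\mu_i}{\mu_i^*}=(U^\dagger G\bar U)_{ii}$, where $\Lambda=\diag(\lambda_k)$ and $G_{kl}=\sqrt{\lambda_k\lambda_l}\braket{e_k}{e_l^*}$ is a complex \emph{symmetric} matrix.

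The key structural input is the Autonne--Takagi factorization $G=WDW^{T}$ with $W$ unitary and $D=\diag(d_k)$, $d_k\ge 0$. I would then identify the $d_k$ with the singular values of $\sqrt{\rho}\sqrt{\rho^T}$: setting $E=\sum_k\ketbra{e_k}{k}$ one checks $\sqrt{\rho}\sqrt{\rho^T}=EGE^{T}$, and two-sided multiplication by unitaries preserves singular values, so $\sum_k d_k=\tr|\sqrt{\rho}\sqrt{\rho^T}|=\sqrt{F(\rho,\rho^T)}$. Choosing $U=WQ$ (with $Q$ to be determined) collapses the conjugate overlap to $\braket{\mu_i}{\mu_i^*}=(Q^\dagger D\bar Q)_{ii}$, since $W^\dagger W=W^{T}\bar W=\id$, while $p_i=(Q^\dagger\, W^\dagger\Lambda W\,Q)_{ii}$.

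The decisive simplification is to restrict $Q$ to be \emph{real} orthogonal. Then $\braket{\mu_i}{\mu_i^*}=\sum_k d_k Q_{ki}^2\ge 0$ is automatically real and nonnegative, and $p_i=\bigl(Q^{T}\real(W^\dagger\Lambda W)Q\bigr)_{ii}$, so the target identity $|\braket{\mu_i}{\mu_i^*}|=\sqrt{F(\rho,\rho^T)}\,p_i$ becomes the single requirement that every diagonal entry of $Q^{T}BQ$ vanish, where $B=D-\sqrt{F(\rho,\rho^T)}\,\real(W^\dagger\Lambda W)$ is real symmetric and, crucially, traceless: $\tr B=\sqrt{F(\rho,\rho^T)}-\sqrt{F(\rho,\rho^T)}\,\tr\Lambda=0$. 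The main obstacle throughout is the coupling between the normalizations $p_i$ and the conjugate overlaps, and this is exactly what the real-$Q$ reduction dissolves, leaving a purely orthogonal-conjugation problem. Its solution is the Schur--Horn theorem (Horn's direction): the zero vector is majorized by the spectrum of any traceless symmetric matrix, so an orthogonal $Q$ annihilating the diagonal of $Q^{T}BQ$ exists. Feeding it back through $U=WQ$ produces a genuine decomposition of $\rho$ (guaranteed by $UU^\dagger=\id$) in which $|\braket{\psi_i}{\psi_i^*}|=\sqrt{F(\rho,\rho^T)}$, hence $\mathscr{I}_g(\ket{\psi_i})=\mathscr{I}_g(\rho)$ for all $i$. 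I expect the only delicate bookkeeping to be the identification of the Takagi values with the singular values of $\sqrt{\rho}\sqrt{\rho^T}$ and the check that real $Q$ forces the overlaps to be nonnegative; the remaining steps are routine.
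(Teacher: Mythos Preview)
Your argument is correct, and it reaches the goal by a genuinely different route than the paper. Both proofs start from the same place: the Autonne--Takagi factorization of the symmetric matrix $G_{kl}=\sqrt{\lambda_k\lambda_l}\braket{e_k}{e_l^*}$, which the paper uses (inside the proof of Theorem~\ref{Thm:geom_meas}) to produce a ``conjugate-orthogonal'' decomposition $\{\lambda_j,\ket{\mu_j}\}$ with $\sqrt{\lambda_i\lambda_j}\braket{\mu_i}{\mu_j^*}=\delta_{ij}D_j$. From there the paper equalizes the imaginarities by an iterative, Wootters-style argument: whenever two members of the ensemble straddle $\mathscr{I}_g(\rho)$, mix them by a real $2\times 2$ rotation and invoke the intermediate value theorem to land one of them exactly on $\mathscr{I}_g(\rho)$; conjugate-orthogonality of the rest is preserved, so the recursion terminates. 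You instead absorb the entire equalization into a single invocation of the (real) Schur--Horn theorem, after the clean reduction $(Q^TBQ)_{ii}=0$ with $B=D-\sqrt{F(\rho,\rho^T)}\,\real(W^\dagger\Lambda W)$ real symmetric and traceless; the restriction to real orthogonal $Q$ is exactly what makes both the nonnegativity of $\braket{\mu_i}{\mu_i^*}$ and the reality of $B$ fall out. Your approach is more structural and one-shot, at the cost of importing a nontrivial theorem; the paper's approach is more elementary and constructive (it effectively re-proves the needed special case of Schur--Horn via Givens rotations). One minor point worth noting explicitly in your write-up: with $U=WQ$ unitary and all $\lambda_k>0$ on the support, $p_i=\sum_k\lambda_k|U_{ki}|^2>0$ automatically, so the resulting decomposition is genuine.
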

Similar to the two-qubit entanglement theory \cite{vidal_prob,Wei_2003,wooters_entform}, this lemma shows that there exists a pure state decomposition for any state $\rho$, such that all the pure states have same geometric measure as $\rho$. Using this lemma, we proove theorem \ref{thm:Geom_Continuity} in the Appendix. Equipped with this we are now ready to discuss the stochastic-approximate transformations of an initial pure state.

\begin{theorem} \label{thm:PureConversion}
The maximal probability to convert a pure state $\ket{\psi}$ into an arbitrary state $\rho$ via real operations with a fidelity $f$ is given by
\begin{equation}\label{eq:optimal_probability_imaginarity}
P_{f}(\ket{\psi}\rightarrow\rho) = \begin{cases}
1\,\,\,\mathrm{for}\,\,\,m_1\geq0\\
\frac{\mathscr{I}_g(\ket{\psi})}{\sin^{2}\left(\sin^{-1}\sqrt{\mathscr{I}_g(\rho)}-\cos^{-1}\sqrt{f}\right) }\,\,\mathrm{otherwise},
\end{cases}
\end{equation}
where $m_1=\sin^{-1}\sqrt{\mathscr{I}_g(\ket{\psi})}-\sin^{-1}\sqrt{\mathscr{I}_g(\rho)} +\cos^{-1}\sqrt{f}$.
\end{theorem}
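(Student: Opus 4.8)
The plan is to reduce the stochastic-approximate conversion to the \emph{exact} probabilistic conversion already solved in Eq.~\eqref{opt_prob}. The guiding observation is that requiring the output to have fidelity at least $f$ with $\rho$ is equivalent to permitting the protocol to terminate on \emph{any} state in the fidelity ball $S_{\rho,f}$ of Theorem~\ref{thm:Geom_Continuity}. Accordingly, the first step is to establish the identity
\begin{equation}
P_{f}(\ket{\psi}\rightarrow\rho)=\max_{\rho'\in S_{\rho,f}}P(\ket{\psi}\rightarrow\rho').
\end{equation}

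I would prove this by two inequalities. For ``$\geq$'': for every fixed $\rho'\in S_{\rho,f}$ the optimal exact converter $\ket{\psi}\rightarrow\rho'$ is a feasible competitor in the definition of $P_f$, since its normalized output has fidelity $\geq f$ with $\rho$; hence $P_f(\ket{\psi}\rightarrow\rho)\geq P(\ket{\psi}\rightarrow\rho')$, and taking the supremum over $\rho'$ yields the bound. For ``$\leq$'': any real operation $\Lambda$ admissible for $P_f$ produces a normalized output $\sigma'=\Lambda(\ketbra{\psi}{\psi})/\tr\Lambda(\ketbra{\psi}{\psi})$ that itself lies in $S_{\rho,f}$, and $\Lambda$ realizes the exact conversion $\ket{\psi}\rightarrow\sigma'$ with success probability $\tr\Lambda(\ketbra{\psi}{\psi})\leq P(\ket{\psi}\rightarrow\sigma')\leq\max_{\rho'\in S_{\rho,f}}P(\ket{\psi}\rightarrow\rho')$; maximizing over $\Lambda$ gives the reverse inequality.

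Next I would evaluate the right-hand side. Substituting Eq.~\eqref{opt_prob}, $P(\ket{\psi}\rightarrow\rho')=\min[\mathscr{I}_g(\ket{\psi})/\mathscr{I}_g(\rho'),1]$, which is monotonically decreasing in $\mathscr{I}_g(\rho')$; therefore the maximization over $\rho'\in S_{\rho,f}$ is attained by \emph{minimizing} $\mathscr{I}_g(\rho')$, and Theorem~\ref{thm:Geom_Continuity} supplies exactly this minimum, namely $\sin^{2}(\max\{\sin^{-1}\sqrt{\mathscr{I}_g(\rho)}-\cos^{-1}\sqrt{f},0\})$. Writing $\theta_\rho=\sin^{-1}\sqrt{\mathscr{I}_g(\rho)}$ and $\theta_f=\cos^{-1}\sqrt{f}$, so that $m_1=\sin^{-1}\sqrt{\mathscr{I}_g(\ket{\psi})}-\theta_\rho+\theta_f$, I would then split into cases: if $\theta_\rho-\theta_f\leq0$ the minimal imaginarity is $0$, whence $P_f=1$, consistent with $m_1\geq\sin^{-1}\sqrt{\mathscr{I}_g(\ket{\psi})}\geq0$; if $\theta_\rho-\theta_f>0$ the minimal imaginarity is $\sin^{2}(\theta_\rho-\theta_f)$, and since $\mathscr{I}_g\leq\tfrac12$ forces all relevant angles into $[0,\pi/2)$ where $\sin^2$ is increasing, the quantity $\min[\mathscr{I}_g(\ket{\psi})/\sin^{2}(\theta_\rho-\theta_f),1]$ equals $1$ precisely when $\sin^{-1}\sqrt{\mathscr{I}_g(\ket{\psi})}\geq\theta_\rho-\theta_f$, i.e.\ when $m_1\geq0$, and equals $\mathscr{I}_g(\ket{\psi})/\sin^{2}(\theta_\rho-\theta_f)$ otherwise, reproducing Eq.~\eqref{eq:optimal_probability_imaginarity}.

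The main obstacle is the ``$\leq$'' direction of the reduction: I must argue that no genuinely stochastic protocol can beat committing in advance to a single deterministic target in $S_{\rho,f}$. This rests on the fact that the normalized output of any feasible $\Lambda$ is itself an admissible target whose optimal exact conversion probability upper-bounds the success probability of $\Lambda$, so that the supremum over protocols collapses onto the supremum over target states. Once this collapse and the monotonicity of $\sin^2$ on $[0,\pi/2)$ are in hand, the remaining case analysis and trigonometric simplification are routine.
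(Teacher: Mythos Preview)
Your proposal is correct and follows essentially the same route as the paper: reduce $P_f$ to the exact-conversion formula Eq.~\eqref{opt_prob} by optimizing over the fidelity ball $S_{\rho,f}$, then invoke Theorem~\ref{thm:Geom_Continuity} for the minimum of $\mathscr{I}_g$ over that ball, and finish with the case split on $m_1$. The paper simply asserts the reduction (``the optimal strategy is to go to a state $\rho'$ with minimal geometric measure of imaginarity in $S_{\rho,f}$'') whereas you supply the two-inequality argument explicitly; otherwise the arguments coincide.
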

Note that, if we want exact transformation, i.e., $f=1$, then we recover the result stated in Eq. (\ref{prob_pure_mix}). The optimal fidelity for a given probability can also be found easily:
\begin{equation}
F_{p}(\ket{\psi}\rightarrow\rho)=\begin{cases}
1\,\,\,\mathrm{for}\,\,\,p\leq\frac{\mathscr{I}_g(\ket{\psi})}{\mathscr{I}_g(\rho)},\\
\cos^{2}\left[\sin^{-1}\!\sqrt{\mathscr{I}_g(\rho)}-\sin^{-1}\!\sqrt{\frac{\mathscr{I}_g(\ket{\psi})}{p}}\right]\,\,\mathrm{otherwise}. \label{eq:optimal_fidelity_imaginarity}
\end{cases}
\end{equation}
The details can be found in the appendix. The above result provides a complete solution for state transformations under real operations, when starting from a pure state. Note that for a special case when $\rho$ is a maximally imaginary state and $p=1$, we get back the result derived in \cite{Varun_stochastic}. 

In the following, we show that when starting from an arbitrary state $\rho_A=\sum_{i,j}c_{ij}\ket{i}\!\bra{j}$, for a given probability of transformation $p$, the optimal achievable fidelity to reach a pure target state $\ket{\psi_B}$, can be computed by a semidefinite programe (SDP). Note that a CP map is real iff the choi matrix associated with it is real \cite{Gour2018}. Let's take a real CP map $\Lambda:A\rightarrow B$ and $\Sigma_{\Lambda}$ is the choi matrix of $\Lambda$, given by
\begin{equation}
    \Sigma_{\Lambda}=\openone\otimes\Lambda\left(\sum_{i,j}\ket{i}\!\bra{j}\otimes\ket{i}\!\bra{j}\right).
\end{equation}
Here, $d$ is the dimension of the Hilbert space of $A$. It is known that (see Eq. (4.2.12) of \cite{https://doi.org/10.48550/arxiv.2011.04672})
\begin{equation}\label{choi}
    \Lambda(\rho_A)=\tr_{A}(\Sigma_{\Lambda}(\rho_A^{T}\otimes \openone_{B})),
\end{equation}
where $T$ represents the transpose in above mentioned basis. Using Eq. (\ref{choi}), for any pure state $\ket{\psi_B}$ 
\begin{eqnarray}\label{inner}
    \bra{\psi_B}\Lambda(\rho_A)\ket{\psi_B}&=\tr(\Sigma_{\Lambda}(\rho_A^{T}\otimes \ket{\psi_B}\!\bra{\psi_B}))
\end{eqnarray}
Note that, $\Lambda'$ can be considered as another real operation which completes the real operation $\Lambda$ (i.e $\Lambda + \Lambda'$ is a real CPTP map).
This leads us to the following theorem.
\begin{theorem}
The optimal achievable fidelity of transforming a quantum state $\rho$ (acting on a Hilbert space $A$) into a pure state $\ket{\psi}\bra{\psi}$ (acting on a Hilbert space $B$) with a probability $p$, is given by the following semidefinite program.
\\Maximise:
\begin{equation}
    \frac{1}{p}\tr (\Sigma_{\Lambda}(\rho^{T}\otimes \ket{\psi}\!\bra{\psi}))\,\, 
\end{equation}
under the constraints,
\begin{eqnarray}\nonumber
    &&\Sigma_{\Lambda}\geq 0,\, 
     \tr_{B}  \Sigma_{\Lambda}\leq\openone,\,
     \frac{1}{p}\tr (\Sigma_{\Lambda}(\rho^T\otimes \openone))=1\,\,\textrm{and}\,\,\Sigma_{\Lambda}^T=\Sigma_{\Lambda}. 
\end{eqnarray}
\end{theorem}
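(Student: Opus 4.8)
The plan is to show that the fidelity-optimization defining $F_{p}(\rho\to\ketbra{\psi}{\psi})$ is \emph{exactly} the displayed SDP, by three reductions: linearize the fidelity because the target is pure, rewrite real trace-non-increasing maps as linear constraints on the Choi matrix, and finally eliminate the ratio structure of the objective by a scaling argument. First I would use that the (squared root) fidelity to a pure state is linear: for any $\tau$, $\sqrt{\sqrt{\tau}\ketbra{\psi}{\psi}\sqrt{\tau}}$ is rank one with trace $\sqrt{\bra{\psi}\tau\ket{\psi}}$, so $F(\tau,\ketbra{\psi}{\psi})=\bra{\psi}\tau\ket{\psi}$. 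Applying this to the post-selected output $\tau=\Lambda(\rho)/\tr\Lambda(\rho)$ and inserting Eqs.~\eqref{choi} and \eqref{inner}, the quantity maximized in $F_{p}$ becomes
\[
\frac{\bra{\psi}\Lambda(\rho)\ket{\psi}}{\tr\Lambda(\rho)}
=\frac{\tr\!\left(\Sigma_{\Lambda}(\rho^{T}\otimes\ketbra{\psi}{\psi})\right)}
{\tr\!\left(\Sigma_{\Lambda}(\rho^{T}\otimes\openone)\right)},
\]
subject to the side condition $\tr\Lambda(\rho)=\tr(\Sigma_{\Lambda}(\rho^{T}\otimes\openone))\geq p$.

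Second, I would parametrize the admissible maps by their Choi matrices. Complete positivity is equivalent to $\Sigma_{\Lambda}\geq0$, the trace-non-increasing property to $\tr_{B}\Sigma_{\Lambda}\leq\openone$, and, invoking \cite{Gour2018}, reality of $\Lambda$ to $\Sigma_{\Lambda}$ having real entries; for a Hermitian Choi matrix this is precisely $\Sigma_{\Lambda}^{T}=\Sigma_{\Lambda}$. I would also confirm the remark preceding the theorem: when $\Sigma_{\Lambda}$ is real and $\tr_{B}\Sigma_{\Lambda}\leq\openone$, the residual $M=\openone-\tr_{B}\Sigma_{\Lambda}\geq0$ is real, so a real completion exists (e.g.\ $\Sigma_{\Lambda'}=M\otimes\omega$ for a fixed real state $\omega$ on $B$), guaranteeing that $\Lambda$ is a genuine branch of a real instrument rather than a mere relaxation.

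Third---and this is the crux---I would remove the ratio. The expression above is invariant under $\Sigma_{\Lambda}\mapsto t\,\Sigma_{\Lambda}$ for $t>0$, so only the ``direction'' of $\Sigma_{\Lambda}$ matters. Given any feasible $\Sigma_{\Lambda}$ with $\tr(\Sigma_{\Lambda}(\rho^{T}\otimes\openone))=q\geq p$, rescaling by $t=p/q\in(0,1]$ preserves $\Sigma_{\Lambda}\geq0$ and $\Sigma_{\Lambda}^{T}=\Sigma_{\Lambda}$, keeps $\tr_{B}\Sigma_{\Lambda}\leq\openone$ (as $t\leq1$), leaves the fidelity ratio unchanged, and forces $\frac{1}{p}\tr(\Sigma_{\Lambda}(\rho^{T}\otimes\openone))=1$. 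Hence restricting to probability exactly $p$ costs nothing, and on that slice the objective collapses to the \emph{linear} functional $\frac{1}{p}\tr(\Sigma_{\Lambda}(\rho^{T}\otimes\ketbra{\psi}{\psi}))$. Conversely, every $\Sigma_{\Lambda}$ feasible for the SDP has $\tr\Lambda(\rho)=p\geq p$ and realizes this fidelity, so the two optima coincide and the problem is an honest SDP for each fixed $p$.

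The main obstacle I anticipate is exactly this final linearization: one must justify that the genuinely fractional maximization defining $F_{p}$ can be replaced by a linear objective with an equality constraint without changing the optimum. This hinges on the scale-invariance of the fidelity-to-a-pure-state together with the fact that \emph{downscaling} a Choi matrix never violates $\tr_{B}\Sigma_{\Lambda}\leq\openone$; both the completion construction and the identification of ``real map'' with ``$\Sigma_{\Lambda}^{T}=\Sigma_{\Lambda}$'' must be handled carefully so that the SDP feasible set coincides precisely with the set of real stochastic operations.
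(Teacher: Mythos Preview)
Your proposal is correct and follows essentially the same route as the paper: the paper's argument consists of the Choi-matrix setup in Eqs.~\eqref{choi} and \eqref{inner}, the remark that reality of $\Lambda$ is equivalent to reality of $\Sigma_{\Lambda}$, and the observation that a real completion $\Lambda'$ exists, after which the theorem is simply stated. Your write-up is in fact more complete than the paper's, since you make explicit the linearization $F(\tau,\ketbra{\psi}{\psi})=\bra{\psi}\tau\ket{\psi}$ and, crucially, the downscaling argument that lets the inequality $\tr\Lambda(\rho)\geq p$ be replaced by the equality constraint $\frac{1}{p}\tr(\Sigma_{\Lambda}(\rho^{T}\otimes\openone))=1$---a step the paper leaves implicit.
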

Using this SDP, we can compute the  optimal achievable fidelity for transforming an arbitrary state into a pure state, for a given probability of transformation. 
\section{Conclusions}
We investigated the properties of real quantum operations both in single party and bipartite settings. This allowed us to characterise the state transformations achievable via real quantum operations and showed the existence of ``real entanglement monotones'', which are independent to the entanglement monotones in complex quantum theory. We then studied the problem of stochastic-approximate state conversion via real quantum operations and for the case of pure initial states, we provide a complete solution. In the case when the final state is pure, we provide a SDP to compute the largest achievable fidelity of transformation, given a probability of success.

\section*{Acknowledgements} This work was supported by the
"Quantum Optical Technologies" project, carried out within the
International Research Agendas programme of the Foundation for Polish
Science co-financed by the European Union under the European Regional
Development Fund and the "Quantum Coherence and Entanglement for Quantum
Technology" project, carried out within the First Team programme of the
Foundation for Polish Science co-financed by the European Union under
the European Regional Development Fund. CD acknowledges support from the German Federal
Ministry of Education and Research (BMBF) within the funding program ``quantum technologies -- from basic research to market'' in the joint project QSolid (grant
number 13N16163).
\section*{Appendix}

\subsection{Proof of Theorem \ref{Thm:geom_meas}}\label{Thm:geom_meas_proof}
We note that, for pure states, geometric measure of imaginarity is given by \cite{IM2}
\begin{equation}
    \mathscr{I}_g(\ket{\psi}) =\frac{1-|\braket{\psi^*}{\psi}|}{2} =\frac{1 - \sqrt{F(\ket{\psi}\!\bra{\psi},\ket{\psi^*}\!\bra{\psi^*})}}{2}.\label{geo_im_exp_pure}
\end{equation}
Let $\{p_j,\ket{\psi_j}\}$ be the ensemble of $\rho$ which achieves the minimisation in Eq. (\ref{gi_mixed}). Using the joint concavity property of root-fidelity \cite{wilde_2017}, we find
\begin{eqnarray}
\sqrt{F(\rho,\rho^T)} &\geq& \sum_j p_j \sqrt{F(\ketbra{\psi_j}{\psi_j},\ket{\psi_j^*}\bra{\psi_j^*})}\nonumber\\
    \frac{1 - \sqrt{F(\rho,\rho^T)}}{2} &\leq& \sum_j p_j \frac{1 -\sqrt{F(\ket{\psi_j}\bra{\psi_j},\ket{\psi^*_j}\bra{\psi^*_j})}}{2}\nonumber\\
    &=& \mathscr{I}_g(\rho),
\end{eqnarray}
where we consider the fact that $\sum_j p_j=1$ and in the last line we use the expression for pure states given in Eq. (\ref{geo_im_exp_pure}). 
In the above, we derive a lower bound for geometric imaginarity, given by,
\begin{equation}
    \mathscr{I}_g(\rho) \geq \frac{1 - \sqrt{F(\rho,\rho^T)}}{2}. \label{geo_im_lower}
\end{equation}
Since, we know
\begin{equation}
    \mathscr{I}_g(\rho)=\min_{\{p_j,\ket{\psi_j}\}}\sum_{j} p_j  \mathscr{I}_g(\ket{\psi_j}), 
\end{equation}
with $\sum_j p_j \ket{\psi_j}\!\bra{\psi_j}=\rho$ and for each pure state $\ket{\psi_j}$
\begin{equation}
    \mathscr{I}_g(\ket{\psi_j}) =\frac{1-|\braket{\psi_j}{\psi^{*}_j}|}{2},
\end{equation}
Eq. (\ref{geo_im_lower}), is equivalent to
\begin{equation}\label{opt_mod}
    \max_{\{p_j,\ket{\psi_j}\}}\sum_{j} p_j|\braket{\psi_j}{\psi^{*}_j}| \leq \sqrt{F(\rho,\rho^T)}.
\end{equation}
Now, in the following we construct a special decomposition, which saturates the above bound. Note that any other ensemble of $\rho$ (let's say $\{q_i,\ket{\phi_i}\}$) is related to  the ensemble $\{p_j,\ket{\psi_j}\}$ in the following way 
\begin{equation}\label{basis}
    \sqrt{q_i}\ket{\phi_i} = \sum_j U^{*}_{ij}\sqrt{p_j}\ket{\psi_j},
 \end{equation}
where $U_{ji}$ are the elements of a Unitary matrix $U$. Without loss of generality, we can choose $\{p_j,\ket{\psi_j}\}$ to be the eigenvalues and eigenstates of $\rho$. 
\begin{equation} \label{star_product}
     \sqrt{q_i q_j}\braket{\phi_i}{\phi^{*}_j} = (U A U^{T})_{ij},
\end{equation}
where $A_{ij}=\sqrt{p_i p_j}\braket{\psi_i}{\psi^{*}_j} $. In \cite{maths}, the authors show that for any symmetric matrix $S$, there exists a singular value decomposition of the form
\begin{equation}
    S = Q \Sigma Q^T \label{sym},
\end{equation}
where $\Sigma$ is a positive diagonal matrix and $Q$ is a unitary matrix. Eq. (\ref{sym}) also appears in \cite{horn2012matrix} (P263, Corollary 4.4.4 (c)). The singular values of $A$, are the square roots of eigenvalues of $AA^{\dagger}(= AA^{*}$). Note that, 
\begin{eqnarray}
  (AA^{*})_{ij}  &=& \sum_kp_k\sqrt{p_i p_j}\braket{\psi_i}{\psi^{*}_k}\braket{\psi^{*}_k}{\psi_j} \\
  &=& \sqrt{p_i p_j}\bra{\psi_i}\rho^{T}\ket{\psi_j}\\
  &=&\bra{\psi_i} \sqrt{\rho} \rho^{T}\sqrt{\rho}\ket{\psi_j}.
\end{eqnarray}
Therefore, $(AA^{*})_{ij}$ (matrix) is $\sqrt{\rho}  \rho^{T}\sqrt{\rho}$ (operator), written (in matrix form) in the eigenbasis of $\rho$. This shows that the singular values of $(A)_{ij}$ (matrix) are the eigenvalues of $\sqrt{\sqrt{\rho}  \rho^{T}\sqrt{\rho}}$. Eqs. (\ref{star_product}) and (\ref{sym}) shows that one can find a decomposition (say $\{\lambda_i,\ket{\mu_i}\}$), which is ``conjugate-orthogonal'' 
\begin{equation} \label{optimal_decomposition}
   \sqrt{\lambda_i \lambda_j}\braket{\mu_i}{\mu^{*}_j} = \delta_{ij}D_{j}
\end{equation}
where, $D_j$ is the $j^{th}$ eigenvalue of $\sqrt{\sqrt{\rho}  \rho^{T}\sqrt{\rho}}$.
This completes the proof. We found out that an independent proof of this theorem can be found in \cite{Uhlmann_2000}.

\subsection{Proof of lemma \ref{equal}}\label{proof_equal}

Let $\{\lambda_j,\ket{\mu_j}\}$ be a pure state decomposition we described in Eq. (\ref{optimal_decomposition}). Therefore, $\mathscr{I}_g(\rho)=\sum_j\lambda_j\mathscr{I}_g(\ket{\mu_j})$. We now show that we can achieve a decomposition (say $\{p_i, \ket{\psi_i}\}$), where 
\begin{equation}
    \braket{\psi_i}{\psi^{*}_i}= \braket{\psi_j}{\psi^{*}_j} \,\,\forall\,\, i,j.
\end{equation}
Firstly, note that 
\begin{equation}
    \braket{\mu_i}{\mu^{*}_i} \geq 0 \,\,\forall\,\, i.
\end{equation}
We further define $\braket{\mu_i}{\mu^{*}_i}$ as the ``conjugate product'' of $\ket{\mu_i}$. Therefore, the ``average conjugate product'' ($\sum_i \lambda_i \braket{\mu_i}{\mu^{*}_i}$) of the decomposition $\{\lambda_j,\ket{\mu_j}\}$ is equal to $1-\mathscr{I}_g(\rho)$. Now, we go into another decomposition such that the ``average conjugate product'' remains constant.

As we know (from Eq.(\ref{basis})), we can go into another decomposition by mixing the sub-normalised pure states of the decomposition with elements of a unitary matrix (For our case, we just need a orthogonal matrix). The procedure to achieve the desired decomposition goes as follows.

If $\mathscr{I}_g(\ket{\mu_i})=\mathscr{I}_g(\rho)$, we already have the required decomposition. In the other case, there exist at least two pure states (without loss of generality say) $\ket{\mu_1}$ and $\ket{\mu_2}$, such that
\begin{equation}
    \mathscr{I}_g(\ket{\mu_1})>\mathscr{I}_g(\rho)\,\, \textrm{and}\,\, \mathscr{I}_g(\ket{\mu_2})<\mathscr{I}_g(\rho).
\end{equation}
Now we mix these two sub-normalised pure states in the following way to achieve a new decomposition.
\begin{equation}
\begin{split}
     \sqrt{\lambda_1'}\ket{\mu'_1} &= \cos{\alpha}\sqrt{\lambda_1}\ket{\mu_1} + \sin{\alpha}\sqrt{\lambda_{2}}\ket{\mu_2}\, \mbox{and}\\
     \sqrt{\lambda_2'}\ket{\mu'_2} &= -\sin{\alpha}\sqrt{\lambda_{1}}\ket{\mu_1} + \cos{\alpha}\sqrt{\lambda_{2}}\ket{\mu_2}.
     \end{split}
\end{equation}
 Without loss of generality, we can take $\alpha \in [0,\pi/2]$. Since, $\mathscr{I}_g(\ket{\mu'_1})$ varies continuously with $\alpha$ and it is easy to see that, at $\alpha = 0$
\begin{equation}
    \ket{\mu'_{1}} = \ket{\mu_{1}} \implies \mathscr{I}_g(\ket{\mu'_1}) = \mathscr{I}_g(\ket{\mu_1})
\end{equation}
and at $\alpha$ = $\pi/2$
\begin{equation}
    \ket{\mu'_{1}} = \ket{\mu_{2}} \implies \mathscr{I}_g(\ket{\mu'_1}) = \mathscr{I}_g(\ket{\mu_2}).
\end{equation}
Therefore, there exists a $\alpha$ = $\alpha^{*}$, such that 
\begin{eqnarray}
 &&\sqrt{\lambda'_{1}}\ket{\mu'_{1}} = \cos{(\alpha^{*})}\sqrt{\lambda_{1}}\ket{\mu_{1}} + \sin{(\alpha^{*})}\sqrt{\lambda_{2}}\ket{\mu_{2}}\, \mbox{and}\nonumber\\
 &&\mathscr{I}_g(\ket{\mu'_1}) =\mathscr{I}_g({\rho}).
\end{eqnarray}
Now after this we have a new decomposition of $\rho$. Note, that the average conjugate product of the new decomposition is same as the previous one and $\{\ket{\mu'_2},\ket{\mu_j}\},j=3,4...$ are cojugate-orthonormal. Therefore, we can continue this procedure recursively for the remaining vectors until we reach the required decomposition.

\subsection{Proof of theorem \ref{thm:Geom_Continuity}}\label{proof_geom_continuity}
 Let us make use of the following distance measure:
\begin{equation}
    D(\rho, \sigma) = \cos^{-1}\left(\sqrt{F(\rho, \sigma)}\right), \label{eq:BuresAngle}
\end{equation}
known as the Bures angle. It has the following properties: i) $D(\rho,\sigma)\geq 0$ for all $\rho$ and $\sigma$, ii) $D(\rho,\sigma)= 0$ if an only if $\rho=\sigma$, iii) It satisfies the triangle inequality, i.e., $D(\rho,\sigma)\leq D(\rho,\eta)+D(\eta,\sigma)$. Let us note that, the geometric measure of imaginarity for any state lies between 0 to $1/2$. Consider two states $\rho$ and $\rho'$ with $\rho'\in S_{\rho,f}$. Further we assume that $\rho_r$ and $\rho_r'$ are the closest real states to $\rho$ and  $\rho'$ (with respect to Bures angle) respectively. Using the fact that $\rho_r$ is the closest real state to $\rho$ and the triangle inequality, we find
\begin{eqnarray}
\cos^{-1}\left(\sqrt{1-\mathscr{I}_g(\rho)}\right) &=& D(\rho,\rho_r) \leq D(\rho,\rho'_r) \\ \nonumber
&\leq& D(\rho,\rho')+D(\rho', \rho'_r)\\ \nonumber
&\leq& \cos^{-1}(\sqrt{f}) + \cos^{-1}\left(\sqrt{1-\mathscr{I}_g(\rho')}\right).
\end{eqnarray}
Simplifying this, we have
\begin{equation} \label{geometric_lower_bound}
    \mathscr{I}_g(\rho') \geq \sin^{2}\left(\max\left\{ \sin^{-1}\!\sqrt{\mathscr{I}_g(\rho)}-\cos^{-1}\!\sqrt{f},0\right\} \right).
\end{equation}
An upper bound on $\mathscr{I}_g(\rho')$ can be derived in similar fashion. Again using the fact that $\rho'_r$ is the closest real state to $\rho'$ and the triangle inequality, we get
\begin{eqnarray}
\cos^{-1}\left(\sqrt{1-\mathscr{I}_g(\rho')}\right) &=& D(\rho',\rho'_r)\leq D(\rho',\rho_r) \\ 
&\leq& D(\rho,\rho')+D(\rho, \rho_r)\nonumber\\
&\leq& \cos^{-1}(\sqrt{f}) + \cos^{-1}\left(\sqrt{1-\mathscr{I}_g(\rho)}\right)\nonumber. 
\end{eqnarray}
Simplifying this, we will end up with the following upper bound
\begin{equation} \label{geometric_upper_bound}
    \mathscr{I}_g(\rho') \leq \sin^{2}\left(\min\left\{ \sin^{-1}\!\sqrt{\mathscr{I}_g(\rho)}+\cos^{-1}\!\sqrt{f},\frac{\pi}{4}\right\} \right).
\end{equation}
Now we show that lower bound in Eq. (\ref{geometric_lower_bound}) can be achieved. From lemma \ref{equal}, we know that for any state $\rho$, we can find a pure state decomposition such that $\rho = \sum_i p_i \ket{\psi_i}\!\bra{\psi_i}$ and $\mathscr{I}_g(\psi_i) = \mathscr{I}_g(\rho)$ holds true for all states $\ket{\psi_i}$. This implies that each of the states $\ket{\psi_i}$ can be written as
\begin{eqnarray}
    \ket{\psi_i} &= \cos{\alpha}\ket{a_i} + i \sin{\alpha}\ket{a^{\perp}_i},
\end{eqnarray}
where $\braket{a_i}{a^{\perp}_i} = 0$ and $\ket{a_i}$ and $\ket{a^{\perp}_i}$ are real states. Here, we assumed, $\alpha=\sin^{-1}\sqrt{\mathscr{I}_g(\rho)}$. Note that $\alpha \in\{0,\frac{\pi}{4}\}$. Let's consider another state $\rho_{\min}$ to be
\begin{align}\label{minimum_decomposition}
    \rho_{\min} = \sum_i q_i \ket{\phi_i}\!\bra{\phi_i}
\end{align}
with
\begin{eqnarray}
    \ket{\phi_i} &= \cos{\tilde{\beta}}\ket{a_i}  + \sin{\tilde{\beta}}\ket{a^{\perp}_i},
\end{eqnarray}
where $\tilde \beta \in [0,\pi/4]$. The probabilities are given by
\begin{equation}
    q_i = \frac{p_i |\braket{\psi_i}{\phi_i}|^2}{\sum_k p_k |\braket{\psi_k}{\phi_k}|^2}. \label{prob}
\end{equation}
From the convexity of the geometric measure of imaginarity, we obtain
\begin{equation}
 \mathscr{I}_g(\rho_{\min})\leq \sin^2\tilde{\beta}. \label{convexity}
\end{equation}
Now we use the properties of root-fidelity \cite{wilde_2017} to obtain
\begin{align}
   \sqrt{F(\rho, \rho_{\min})} &\geq \sum_i \sqrt{p_i q_i} |\braket{\psi_i}{\phi_i}| \nonumber \\
    &= \sqrt{\sum_i p_i |\braket{\psi_i}{\phi_i}|^2} = |\cos (\alpha -\tilde{\beta})|.
    \end{align}
    Therefore,
    \begin{equation}
        F(\rho, \rho_{\min}) \geq \cos^{2}(\alpha -\tilde{\beta}). \label{fidelity}
    \end{equation}
Now, let us assume $\cos^{-1}\sqrt{f}=k$ and set
\begin{equation}
    \tilde{\beta}=\max\{\alpha-k,0\}. \label{eq:btilde}
\end{equation}
For this choice of $\tilde{\beta}$, we get
    \begin{equation}\label{ineq1}
        F(\rho, \rho_{\min}) \geq \cos^{2}\left(\min\{k,\alpha\}\right) \geq \cos^{2}k\geq f. \end{equation}
Here, we used the fact that $\cos^2$ is a monotonically decreasing function in $[0, \pi/4]$ and $\min\{k,\alpha\}$ is within $[0, \pi/4]$. This shows that, the state $\rho_{\min}$ is within $S_{\rho,f}$. Further, from Eq.~(\ref{convexity}) we get
\begin{eqnarray}
\mathscr{I}_g(\rho_{\min})&\leq&\sin^{2}\left(\max\{\alpha-k,0\}\right)\nonumber\\
&\leq&\sin^{2}\left(\max\left\{ \sin^{-1}\!\sqrt{\mathscr{I}_g(\rho)}-\cos^{-1}\!\sqrt{f},0\right\} \right). \label{eq:btildeUpperBound}
\end{eqnarray}
Comparing Eqs. (\ref{geometric_lower_bound})and (\ref{eq:btildeUpperBound}), we obtain 
\begin{equation}
 \mathscr{I}_g(\rho_{\min}) =  \sin^{2}\left(\max\left\{ \sin^{-1}\!\sqrt{\mathscr{I}_g(\rho)}-\cos^{-1}\!\sqrt{f},0\right\} \right).
 \end{equation}
Hence, the minimum geometric measure of imaginarity within the set $S_{\rho,f}$ is given by 
\begin{equation}\label{min_geo}
\min_{\rho'\in S_{\rho,f}}\mathscr{I}_g(\rho') 
=\sin^{2}\left(\max\left\{ \sin^{-1}\!\sqrt{\mathscr{I}_g(\rho)}-\cos^{-1}\!\sqrt{f},0\right\} \right).
\end{equation}
Now, we focus on the case when $\rho$ is a pure state (let's say $\ket{\psi}$):
\begin{eqnarray}
\ket{\psi}=\cos{\alpha}\ket{a} + i \sin{\alpha}\ket{a^{\perp}}.
\end{eqnarray}
For this case, the upper bound in Eq.~(\ref{geometric_upper_bound}) is achievable as well. To show this, we consider the following pure state
\begin{align}
    \ket{\psi_{\max}} &= \cos(\min \{ \alpha + k, \pi/4\})\ket{a} \label{pure_max}\\
    &+ i \sin(\min \{ \alpha + k, \pi/4\})\ket{a^{\perp}}. \nonumber
\end{align}
Note that 
\begin{align} 
\mathscr{I}_g(\ket{\psi_{\max}}) &= \sin^2(\min \{ \alpha + k, \pi/4\}). 
\end{align}
We also note that
\begin{equation}
F(\psi, \psi_{\max}) = \cos^{2}\left(\min \left\{k,\frac{\pi}{4}-\alpha\right\} \right) \geq \cos^{2}k\geq f. 
\end{equation}
Here we use the fact that, $\cos^2$ is a monotonically decreasing function in $[0, \pi/4]$ and $\min \{k,\pi/4-\alpha\}$ is within $[0, \pi/4]$. Therefore, $\ket{\psi_{\max}}$ is inside $S_{\psi,f}$ and has the maximum possible geometric measure of imaginarity given by the following equation:
\begin{eqnarray}
\max_{\rho'\in S_{\psi,f}}\mathscr{I}_g(\rho') &=& \sin^2\left(\min\{\alpha+k,\pi/4\}\right)\\
&=&\sin^{2}\left(\min\left\{ \sin^{-1}\!\sqrt{\mathscr{I}_g(\psi)}+\cos^{-1}\!\sqrt{f},\pi/4\right\} \right).\nonumber
\end{eqnarray}
This completes the proof.

\subsection{Proof of theorem \ref{thm:PureConversion}}
From, Eq. (\ref{opt_prob}), we know that optimal probability to achieve $\rho$ from a pure state $\ket{\psi}$ with unit fidelity is given by
\begin{equation}
    P(\ket{\psi} \rightarrow \rho) =\min \left\{\frac{\mathscr{I}_g(\ket{\psi})}{\mathscr{I}_g(\rho)},1\right\}. 
\end{equation}
If we want to achieve $\rho$, with fidelity $f$ at least, the optimal strategy is to go to a state ($\rho'$) with a minimal geometric measure of imaginarity in $S_{\rho,f}$. Therefore,
\begin{equation}
P_{f}(\ket{\psi}\rightarrow\rho)=\min\left\{\frac{\mathscr{I}_g(\ket{\psi})}{\min_{\rho'\in S_{\rho,f}}\mathscr{I}_g(\rho')},1\right\}.
\end{equation}
From Eq. (\ref{min_geo}), we know that
\begin{align}
\min_{\rho'\in S_{\rho,f}}\mathscr{I}_g(\rho') & =\sin^{2}\left(\max\left\{ \sin^{-1}\!\sqrt{\mathscr{I}_g(\rho)}-\cos^{-1}\!\sqrt{f},0\right\} \right).
\end{align}
Let us now define,
\begin{equation}
    m_1=\sin^{-1}\sqrt{\mathscr{I}_g(\ket{\psi})}-\sin^{-1}\sqrt{\mathscr{I}_g(\rho)} +\cos^{-1}\sqrt{f},
\end{equation}
considering the case when $m_1\geq0$, we get
\begin{equation}
    \sin^{-1}\sqrt{\mathscr{I}_g(\rho)}-\cos^{-1}\sqrt{f} \leq \sin^{-1}\sqrt{\mathscr{I}_g(\ket{\psi})}.
\end{equation}
We know that
\begin{equation}
\sin^{-1}\sqrt{\mathscr{I}_g(\rho)}-\cos^{-1}\sqrt{f}\in[-\pi/2,\pi/4]
\end{equation}
and $\sin^{-1}\sqrt{\mathscr{I}_g(\ket{\psi})}\in[0,\pi/4]$. Therefore, 
\begin{equation}
    \max\left\{\sin^{-1}\sqrt{\mathscr{I}_g(\rho)}-\cos^{-1}\sqrt{f},0\right\}\leq \sin^{-1}\sqrt{\mathscr{I}_g(\ket{\psi})}.
\end{equation}
Using these, we get
\begin{align}
\min_{\rho'\in S_{\rho,f}}\mathscr{I}_g(\rho') & =\sin^{2}\left(\max\left\{ \sin^{-1}\sqrt{\mathscr{I}_g(\rho)}-\cos^{-1}\sqrt{f}),0\right\} \right)\\
 & \leq\sin^{2}\left(\sin^{-1}\sqrt{\mathscr{I}_g(\ket{\psi})}\right)=\mathscr{I}_g(\ket{\psi}). \nonumber
\end{align}
When $\mathscr{I}_g(\ket{\psi}) > 0$,
\begin{equation}
    \frac{\mathscr{I}_g(\ket{\psi})}{\min_{\rho'\in S_{\rho,f}}\mathscr{I}_g(\rho')} \geq 1.
\end{equation}
This shows that $P_{f}(\ket{\psi}\rightarrow\rho)=1$ when $m_1\geq 0 $. 

Now, we look at the other case when, $m_1<0$, expressed as
\begin{align}
   \sin^{-1}\sqrt{\mathscr{I}_g(\rho)}-\cos^{-1}\sqrt{f}>\sin^{-1}\sqrt{\mathscr{I}_g(\ket{\psi})}>0.
\end{align}
From Eq. (\ref{opt_prob}), we have
\begin{equation}\label{eq:optimal_probability_imaginarity}
    P_{f}(\ket{\psi}\rightarrow\rho)=\frac{\mathscr{I}_g(\ket{\psi})}{\sin^{2}(\sin^{-1}\sqrt{\mathscr{I}_g(\rho)}-\cos^{-1}\sqrt{f}) }.
\end{equation}
Using this result, one can also find a closed expression for $F_p$. Let's first consider the case when $p \leq \frac{\mathscr{I}_g(\psi)}{\mathscr{I}_g(\rho)}<1$, one can see that, in this case $F_p (\psi \rightarrow \rho) = 1$. Let us remind that, if $G(\psi)\geq G(\rho)$, then the transformation is always possible exactly with unit probability. When $1\geq p>\frac{\mathscr{I}_g(\psi)}{\mathscr{I}_g(\rho)}$, the optimal achievable fidelity can be obtained by solving Eq.~(\ref{eq:optimal_probability_imaginarity}) for $f$, which gives 
\begin{eqnarray}
F_p (\psi \rightarrow \rho)=\cos^{2}\left[\sin^{-1}\!\sqrt{\mathscr{I}_g(\rho)}-\sin^{-1}\!\sqrt{\frac{\mathscr{I}_g(\psi)}{p}}\right].
\end{eqnarray}
This completes the proof.
\bibliography{imaginarity}
\end{document}